% ****** Start of file apssamp.tex ****** 
%
%   This file is part of the APS files in the REVTeX 4.1 distribution.
%   Version 4.1r of REVTeX, August 2010
%
%   Copyright (c) 2009, 2010 The American Physical Society.
%
%   See the REVTeX 4 README file for restrictions and more information.
%
% TeX'ing this file requires that you have AMS-LaTeX 2.0 installed
% as well as the rest of the prerequisites for REVTeX 4.1
%
% See the REVTeX 4 README file
% It also requires running BibTeX. The commands are as follows:
%
%  1)  latex apssamp.tex
%  2)  bibtex apssamp
%  3)  latex apssamp.tex  
%  4)  latex apssamp.tex 
%  
\documentclass[%
 reprint,
superscriptaddress,
% groupedaddress,
%unsortedaddress,
%runinaddress,
%frontmatterverbose, 
%preprint,
%showpacs,preprintnumbers,
nofootinbib,
%nobibnotes,
%bibnotes,
 amsmath,amssymb,
 % aps,
% pra,
% prb,
% rmp,
 prx, 
%prstab,
%prstper,
%floatfix,
% longbibliography,
 accepted=2019-09-27,
]{quantumarticle}

\usepackage{graphicx}% Include figure files
\usepackage{dcolumn}% Align table columns on decimal point
\usepackage{bm}% bold math
%\usepackage{hyperref}% add hypertext capabilities
%\usepackage[mathlines]{lineno}% Enable numbering of text and display math
%\linenumbers\relax % Commence numbering lines

%\usepackage[showframe,%Uncomment any one of the following lines to test 
%%scale=0.7, marginratio={1:1, 2:3}, ignoreall,% default settings
%%text={7in,10in},centering,
%%margin=1.5in,
%%total={6.5in,8.75in}, top=1.2in, left=0.9in, includefoot,
%%height=10in,a5paper,hmargin={3cm,0.8in},
%]{geometry}

\usepackage[numbers]{natbib}
\usepackage{graphicx}
\usepackage{subcaption}
\usepackage[colorlinks=true]{hyperref}
\usepackage{physics}
\usepackage[draft,% inline,nomargin,
author=]{fixme}
\fxusetheme{color}
\makeatletter
\def\@lox@prtc{\section*{\@fxlistfixmename}\begingroup\def\@dotsep{4.5}}
\def\@lox@psttc{\endgroup}
\makeatother

\usepackage[english]{babel}

\usepackage{tikz}
\usetikzlibrary{shapes,arrows,positioning} 
% Define block styles
\tikzstyle{decision} = [shape=rectangle, draw=black, very thick,
text width=1.5em, text centered,minimum height=2em]
\tikzstyle{chance} = [draw=black, very thick, shape=circle,
text width=1.1em, text badly centered]
\tikzset{
  % Define standard arrow tip
  >=stealth',
  % Define arrow style
  pil/.style={
    ->,
    thick, 
    shorten <=2pt, 
    shorten >=2pt,}
}

\usepackage{qcircuit}

\usepackage{amsthm}
\newtheorem{definition}{Definition}
\newtheorem{theorem}{Theorem}
\newtheorem{lemma}{Lemma}

\newcommand{\Z}{\mathbb{Z}}

\newcommand{\M}{\mathcal{M}}
\newcommand{\Id}{\mathbb{I}}
\newcommand{\supp}{\textup{Supp}}
\newcommand{\m}{{\left| \mathcal M \right |}}
\newcommand{\olra}{\overleftrightarrow}
\newcommand{\ola}{\overleftarrow}
\newcommand{\ora}{\overrightarrow}

\renewcommand{\Pr}{\textup{Pr}}

\newcommand{\abclz}{{ABCL\textsubscript{0}} }
\newcommand{\abclo}{{ABCL\textsubscript{1}} }

\newcommand{\proj}[1]{{\bm[}#1{\bm]}}
% \newcommand{\proj}[1]{[#1]}

% \graphicspath{{./figures/}}
 
\begin{document}

\title{$\psi$-epistemic interpretations of quantum theory have a measurement problem}% Force line breaks with \\

\author{Joshua B. Ruebeck}
\email{jbruebeck@uwaterloo.ca}
\orcid{0000-0003-3671-1667}
\author{Piers Lillystone}
\affiliation{%
  Institute for Quantum Computing and Department of Physics and Astronomy\\
  University of Waterloo, Waterloo, Ontario N2L 3G1, Canada
}%
\author{Joseph Emerson}
\affiliation{
  Institute for Quantum Computing and Department of Applied Math\\
  University of Waterloo, Waterloo, Ontario N2L 3G1, Canada
}
\affiliation{
  Canadian Institute for Advanced Research, Toronto, Ontario M5G 1Z8, Canada
}
\date{\today}

\begin{abstract}
  $\psi$-epistemic interpretations of quantum theory maintain that
  quantum states only represent incomplete information about the
  physical states of the world.  A major motivation for this view is
  the promise to provide a reasonable account of state update under
  measurement by asserting that it is simply a natural feature of
  updating incomplete statistical information.  Here we demonstrate
  that all known $\psi$-epistemic ontological models of quantum theory
  in dimension $d\geq3$, including those designed to evade the 
  conclusion of the PBR theorem, cannot represent state update
  correctly. Conversely, interpretations for which the wavefunction is
  real evade such restrictions despite remaining subject to
  long-standing criticism regarding physical discontinuity,
  indeterminism and the ambiguity of the Heisenberg cut. This revives
  the possibility of a no-go theorem with no additional assumptions,
  and demonstrates that what is usually thought of as a strength of
  epistemic interpretations may in fact be a weakness.
\end{abstract}

\maketitle

%\tableofcontents 
% \listoffixmes

\section{Introduction}
\label{sec:introduction}

There are many interpretations of quantum theory\footnote{This is an
  understatement.}. Among the many differences between these
interpretations, one that often takes center stage is the stance that
they take towards the wavefunction or quantum state.  Three broad
categories have been identified which capture a number of
interpretations.  Two of these categories are more commonly
juxtaposed: $\psi$-ontic
interpretations~\cite{BohmSuggestedInterpretationQuantum1952,EverettRelativeStateFormulation1957,DiracPrinciplesQuantumMechanics1958,GhirardiUnifieddynamicsmicroscopic1986,PearleCombiningstochasticdynamical1989,Valentinipilotwavetheoryclassical1991,
  BohmUndividedUniverse1993,HollandQuantumTheoryMotion1993,Beltrametticlassicalextensionquantum1995,DieksModalInterpretationQuantum1998,PearleCollapsemodels1999,BarrettQuantumMechanicsMinds1999,LombardiModalInterpretationsQuantum,BassiModelswavefunctioncollapse2013}
posit that the quantum state is a part of the \emph{real} (physical)
state of a system, whereas $\psi$-epistemic
interpretations~\cite{EinsteinCanQuantumMechanicalDescription1935,KochenProblemHiddenVariables1967,BallentineStatisticalInterpretationQuantum1970,HowardEinsteinlocalityseparability1985,BallentineInadequacyEhrenfesttheorem1994,EmersonQuantumChaosQuantumClassical2002,RudolphOntologicalModelsQuantum2006,Spekkensdefenseepistemicview2007,HarriganEinsteinincompletenessepistemic2010}
argue that the quantum state is merely a state of \emph{knowledge}
about the real state of the system. A very thorough review of these
two stances can be found in~\cite{Leiferquantumstatereal2014}. A third
recently articulated category of $\psi$-doxastic
interpretations~\cite{CabelloInterpretationsquantumtheory2017,SchackParticipatoryrealism2016,WernerHeisenbergPhysicsPhilosophy1958,FuchsintroductionQBismapplication2014,RovelliRelationalquantummechanics1996,HealeyQuantumTheoryPragmatist2010,BubWhyBohrwas2017}
argue that the quantum state is a state of \emph{belief}, and are
distinguished from $\psi$-epistemic interpretations by the fact that
they deny that a system has some `real state.' While not all
interpretations conform to these three descriptors, they are useful
categories insofar as they allow us to qualitatively discuss certain
features separately from the particular interpretation in which they
are embedded.

The $\psi$-epistemic class has garnered attention as a view which
provides very appealing explanations of otherwise paradoxical features
of quantum theory like the state update
rule~\cite{BallentineStatisticalInterpretationQuantum1970}, the
classical limit under quantum chaos
\cite{EmersonQuantumChaosQuantumClassical2002}, no
cloning~\cite{FuchsQuantumMechanicsQuantum2002}, and
entanglement~\cite{GrossHudsonTheoremfinitedimensional2006}. For
example, through this viewpoint state update is not a physical
`collapse' process and therefore not subject to paradoxes,
indeterminism, and discontinuity; rather it is understood as analogous
to the non-pardoxical `collapse' of a subjective probability
distribution via Bayes' rule upon consideration of new information.
While several $\psi$-epistemic models have been
proposed~\cite{KochenProblemHiddenVariables1967,MontinaDynamicsqubitclassical2012,LewisDistinctQuantumStates2012,Aaronsonpsepistemictheoriesrole2013,HarriganRepresentingprobabilisticdata2007,Spekkensdefenseepistemicview2007,WallmanNonnegativesubtheoriesquasiprobability2012},
they generally have undesirable features or are restricted to a
subtheory of full quantum theory.  None achieve all of the features
that an optimistic $\psi$-epistemicist would expect.

This suggests the possibility that a fully satisfactory
$\psi$-epistemic interpretation cannot actually explain all of quantum
theory despite the qualitatively compelling features of such a
view\footnote{As is well-known, the work of
  Bell~\cite{BellProblemHiddenVariables1966} has shown that no
  $\psi$-epistemic interpretation can evade the non-locality that
  manifests trivially in $\psi$-ontic interpretations; this trivial
  manifestation of non-locality in $\psi$-ontic interpretations is an
  oft-forgotten insight from
  Einstein~\cite{HarriganEinsteinincompletenessepistemic2010,HowardEinsteinlocalityseparability1985,EinsteinCanQuantumMechanicalDescription1935}.}. This
suspicion has led to a number of no-go theorems in recent years which
establish that, given at least one additional assumption, any
consistent interpretation of quantum theory cannot be
$\psi$-epistemic~\cite{Puseyrealityquantumstate2012,Aaronsonpsepistemictheoriesrole2013,Colbecksystemwavefunction2017,HardyAREQUANTUMSTATES2013,MansfieldRealityquantumstate2016}.
These no-go theorems are generally proven within the ontological
models formalism, which describes a large class of existing
interpretations of quantum
theory~\cite{SpekkensContextualitypreparationstransformations2005,HarriganOntologicalmodelsinterpretation2007,RudolphOntologicalModelsQuantum2006}.

Within the ontological models formalism\footnote{A note on potentially
  confusing terminology---an ontological model assumes the reality of
  \emph{some} kind of state (hence ``ontological''), but does not
  assume the reality of the quantum state specifically (and so is not
  necessarily $\psi$-ontic).}, $\psi$-epistemic models can be given a
precise mathematical definition called the $\psi$-epistemic criterion
\cite{Spekkensdefenseepistemicview2007}. This criterion allows the
possibility of conclusively ruling out this type of model. Outside of
this framework, it is unlikely that $\psi$-epistemic models can be
precluded with any kind of certainty; $\psi$-doxastic interpretations,
for example, do not fit neatly into the ontological models framework
and thus are not necessarily ruled out by these no-go theorems. This
is despite the fact that they share many of the features which make
$\psi$-epistemic interpretations appealing. In the present paper we
restrict our attention to the ontological models framework.

The fact that an extra assumption is required to rule out
$\psi$-epistemic theories has purportedly been demonstrated by the
existence of $\psi$-epistemic models which, while being individually
unsatisfactory for various reasons, do satisfy at least the bare
minimum requirements of a $\psi$-epistemic
theory~\cite{LewisDistinctQuantumStates2012,Aaronsonpsepistemictheoriesrole2013,HarriganRepresentingprobabilisticdata2007}. All
of these models were specified within a prepare-measure framework, so
they have been proven to reproduce quantum statistics for all
experiments that involve preparing a state and then measuring it
once. In this paper we show that, if we allow sequential measurements
in the operational description, these models cannot reproduce
operational statistics.

Our main contribution in this work is thus to demonstrate that the
state update rule imposes severe constraints on $\psi$-epistemic
models. This is in contrast to the prevailing view that, as
articulated by Leifer, ``a straightforward resolution of the collapse
of the wavefunction, the measurement problem, Schr\"odinger’s cat and
friends is one of the main advantages of $\psi$-epistemic
interpretations''~\cite{Leiferquantumstatereal2014}. As a consequence,
we revive the possibility of a general no-go theorem for
$\psi$-epistemic models that doesn't rely on an additional assumption
such as the locality assumption required
in~\cite{Puseyrealityquantumstate2012} which conflicts with the
non-locality that is implied by Bell's
theorem~\cite{Emersonwholegreatersum2013,MansfieldRealityquantumstate2016}. See
Appendix~\ref{sec:additional-assumption} for further discussion on
this point.
%,Leiferpsiepistemicmodels2014,MaroneyHowstatisticalare2012}.

Although state update under measurement has been described in a few
specific
models~\cite{MontinaDynamicsqubitclassical2012,Spekkensdefenseepistemicview2007,CataniSpekkenstoymodel2017,LillystoneContextualpsEpistemicModel2019}
and discussed with regards to
contextuality~\cite{KaranjaiContextualityboundsefficiency2018}, it has
yet to be treated generally or in relation to the
$\psi$-epistemic/$\psi$-ontic distinction\footnote{Although the
  Leggett-Garg
  inequalities~\cite{LeggettQuantummechanicsmacroscopic1985,EmaryLeggettGargInequalities2014}
  might be construed as a general treatment of state update, it is
  more accurate to say that they are about the \emph{absence} of state
  update.}. Here we take some preliminary steps in both of these
directions, and argue that $\psi$-epistemic models are the natural
arena in which to investigate interesting behavior of state update
under measurement.

In Section~\ref{sec:defin-meas-update}, we describe the ontological
models formalism, adding a description of state update under
measurement and motivating its importance. Despite this motivation,
one might still argue that many distinctly quantum phenomena
(e.g. Bell inequality violations) can be described without reference
to state update; thus, from an operationalist point of view, we
shouldn't need to consider state update in order to investigate these
phenomena.  However, we show in Section~\ref{sec:main-thm} that the
consideration of state update actually places nontrivial restrictions
on how one can represent even a prepare-and-measure-once
experiment. Thus our results are directly applicable to models which
have only specified behavior for a single
measurement. Section~\ref{sec:exampl-ontol-models} reviews a number of
examples of ontological models from the literature; in each case we
either specify its state update rule (in dimension $d=2$, for
$\psi$-ontic models, and for some models of subtheories) or prove its
impossibility (for all known $\psi$-epistemic models in dimension
$d\geq3$). Finally, we discuss the implications of our results and
describe some open questions in Section~\ref{sec:discussion}.

\section{Defining measurement update in ontological models}
\label{sec:defin-meas-update}

\subsection{The ontological models formalism}
\label{sec:ont-models}

In the standard treatment, an \emph{operational
  theory}~\cite{SpekkensContextualitypreparationstransformations2005}
is described by a set of preparations $\mathcal P$, a set of
transformations $\mathcal T$, and a set of measurements $\mathcal M$
along with a probability distribution
\begin{equation}
  \label{eq:40}
  \Pr(k|M,T,P).
\end{equation}
This quantity describes the probability of some measurement outcome
$k\in\Z$ given an experimenter's choice of $P\in\mathcal P$,
$T\in\mathcal T$, and $M\in\mathcal M$. When considering
transformations this is called the \emph{prepare-transform-measure}
operational framework, and when we omit transformations it is the
\emph{prepare-measure} framework. Often we take $\mathcal P$ to be the
set of pure quantum state preparations, $\mathcal T$ to be the full
set unitary maps on a Hilbert space, and $\mathcal M$ to be all
projective measurements on this Hilbert space. In this case, we say we
are describing the full quantum theory\footnote{Larger sets can be
  considered (e.g. including mixed states, CPTP maps, or
  non-projective measurements), but all of the models studied in this
  paper fit the given definition.}; in contrast, a \emph{subtheory} is
described by taking subsets of $\mathcal P,\mathcal T,\mathcal M$ for
the full quantum theory. For example, in quantum information settings
we often consider only measurements in the standard basis.

Note that this standard definition involves a single measurement and a
single measurement outcome despite the fact many important quantum
experiments (e.g. Stern-Gerlach, double
slit~\cite{SakuraiModernquantummechanics2011}) and quantum algorithms
(e.g. measurement-based error
correction~\cite{GottesmanIntroductionQuantumError2009}) involve
multiple measurements. Thus we will refer to the usual definition of
prepare-measure as prepare-and-measure-once experiments. In this
paper, we are concerned with multiple measurements, so we will also
have to describe probabilities like
\begin{equation}
  \label{eq:41}
  \Pr(k_2,k_1|M_2,M_1,P).
\end{equation}

Additionally, we note that positive operator valued measures (POVMs)
do not fully specify how a measurement updates a state. Although one
can obtain a POVM $\{E_k\}$ from a set of generalized measurement
operators $\{M_k\}$ by the relation $E_k=M_k^\dag M_k$, the
decomposition of $\{E_k\}$ into $\{M_k\}$ is not unique.  Thus
although $\mathcal M$ is often described by POVMs, consideration of
state update requires that we specify generalized measurement
operators instead. As an example of when this is important, consider a
coarse-graining of the measurement $\{M_k\}=\{\proj0,\proj1,\proj2\}$,
where we denote the projector onto a state
\begin{equation}
  \proj\psi=\ketbra{\psi}{\psi}\label{eq:14}
\end{equation}
as in~\cite{Leiferquantumstatereal2014}. We can either coarse-grain
coherently, i.e. measure $\{M_k'\}=\{\proj0+\proj1,\proj2\}$ or we can
coarse-grain decoherently by measuring $\{M_k\}$ and then combining
outcomes 0 and 1 into a single measurement result and `forgetting'
which one actually occurred. While these two processes are represented
by the same POVM $\{E_k\}=\{\proj0+\proj1,\proj2\}$, their state
update behavior is different: if the state
$\frac{1}{\sqrt{2}}\qty(\ket0+\ket1)$ is measured, it will stay the
same in the coherent case or update to the mixed state
$\frac{1}{2}\qty(\proj0+\proj1)$ in the decoherent case.

An \emph{ontological
  model}~\cite{SpekkensContextualitypreparationstransformations2005,HarriganOntologicalmodelsinterpretation2007,RudolphOntologicalModelsQuantum2006}
supplements this operational point of view by asserting that a system
has a state $\lambda$, called an \emph{ontic state}. To specify an
ontological model, we first choose an \emph{ontic state space}
$\Lambda$. Then, preparations are described by a \emph{preparation
  distribution} $\mu(\lambda|P)$, which is the probability of
preparing some state $\lambda\in\Lambda$ given the preparation
$P$. Transformations are described by a \emph{transition matrix}
$\Gamma(\lambda'|\lambda,T)$, which is the probability of preparing a
new state $\lambda'$ given the previous state $\lambda$ and the choice
$T$ of transformation. Finally, measurements are represented by a
\emph{response function} $\xi(k|\lambda,M)$ which describes the
probability of an outcome $k$ given the ontic state $\lambda$ and the
choice of measurement $M$. We say that an ontological model
successfully reproduces quantum theory if
\begin{align}
  \label{eq:43}
  \int_\Lambda\dd{\lambda'}\int_\Lambda\dd{\lambda}\xi(k|\lambda',M)\Gamma(&\lambda'|\lambda,T)\mu(\lambda|P)\nonumber\\
  &= \Pr_Q(k|M,T,P)\\
  \forall &P\in\mathcal P, T\in\mathcal T, M\in\mathcal M,\nonumber
\end{align}
or, in a prepare-and-measure-once experiment,
\begin{align}
  \label{eq:44}
  \int_\Lambda\dd{\lambda}\xi(k|\lambda,M)\mu(\lambda|P) &= \Pr_Q(k|M,P)\\
  \forall &P\in\mathcal P, M\in\mathcal M.\nonumber
\end{align}
In both of these cases, $\Pr_Q$ indicates the outcome probability
calculated by operational quantum theory for the particular
experiment under consideration.

Again, we must supplement this definition in order to model sequential
measurement. In textbook quantum theory, the state updates during a
measurement in a way that depends on the previous state, the
measurement procedure, and the measurement outcome: For a measurement
outcome corresponding to a projector $\Pi_k$, a quantum state
$\ket\psi$ after measurement will be
\begin{equation}
  \label{eq:23}
  \ket{\psi'}= \frac{\Pi_k\ket\psi}{\ev{\Pi_k}{\psi}}.
\end{equation}
We allow for dependence on all of these things by choosing to
represent this via a \emph{state update rule}
$\eta(\lambda'|k,\lambda,M)$. As with $\mu$, $\Gamma$, and $\xi$, we
require that $\eta$ is normalized. Although this object looks very
similar to the transition matrix for transformations, it is
distinguished by two important features which we emphasize by choosing
a new symbol to represent it.

The first distinction is simple, in that $\eta$ depends on a
measurement outcome $k$, while $\Gamma$ does not; this is analogous to
the fact that generally in quantum theory we can only implement
measurement update maps probabilistically (i.e. by post-selecting on a
not-necessarily-deterministic measurement outcome).

The second distinction is the fact that $\eta(\lambda'|k,\lambda,M)$
is not defined for all $\lambda\in\Lambda$. Roughly speaking, it
doesn't make sense to ask ``What is the new state $\lambda'$ after
measuring state $\lambda$ and obtaining outcome $k$?'' if the outcome
$k$ could not have occurred given the previous state $\lambda$. To
express this formally, we define the \emph{support} of a distribution
\begin{equation}
  \label{eq:42}
  \supp(\xi(k|\vdot))=\{\lambda\in\Lambda:\xi(k|\lambda)>0\}
\end{equation}
as the set of ontic states on which it is nonzero. Using this, we can
say that $\eta(\lambda'|k,\lambda,M)$ is well-defined only for
$\lambda\in\supp(\xi(k|\vdot))$. This property of $\eta$ is analogous
to the fact that Eq.~\ref{eq:23} is only well-defined for
$\ev{\Pi_k}{\psi}\neq 0$; i.e. only when the measurement $\Pi_k$ could
have responded to the previous quantum state $\psi$.

This second distinction is central to the main result of this paper;
by showing that $\eta$ is non-normalizable on some domain, we are able
to conclude that that domain cannot be part of the support of $\xi$.

The consistency condition with quantum theory is given by equations
similar to Eqs.~\ref{eq:43} and~\ref{eq:44}; see
Appendix~\ref{sec:hmm-appendix} for more formal treatments of the
properties of $\eta$ and other extensions of the ontological models
formalism discussed so far. The rigorous treatment in the appendix
shows that, given the pre-existing assumptions of the ontological
models formalism, this is the only way to represent state update under
sequential measurement; this is why we refer to our results as using
`no additional assumptions.'

This paper is not explicitly concerned with
contextuality~\cite{KochenProblemHiddenVariables1967,SpekkensContextualitypreparationstransformations2005},
but we are careful to ensure that we do not assume
noncontextuality. Under the definition of generalized contextuality
given in~\cite{SpekkensContextualitypreparationstransformations2005},
an ontological model is \emph{noncontextual} if two operationally
equivalent preparation, transformation, or measurement procedures are
always represented by equivalent preparation distributions, transition
matrices, or response functions, respectively\footnote{This definition
  is actually complicated slightly by the inclusion of sequential
  measurements, but we do not discuss this here.}. It is
\emph{contextual} otherwise. While this generalized definition may be
too broad, accounting for contextuality under this definition also
accounts for the traditional
definition~\cite{KochenProblemHiddenVariables1967}, and is therefore
more inclusive. Although, as stated above, we do not assume
noncontextuality of any kind, most of our results hold for a projector
independent of its full measurement context. We are explicit about
this when it is the case, and write $\xi(\Pi|\lambda)$ rather than
$\xi(k=0|\lambda,M=\{\Pi,\ldots\})$ for notational convenience;
$\eta(\lambda'|\lambda,\Pi)$ is defined similarly.

\subsection{$\psi$-epistemic models}
\label{sec:psi-epistemic-models}

Here we focus on a set of precise criteria for $\psi$-epistemic
interpretations within the ontological models formalism. Consider
first the $\psi$-epistemic criterion, proposed
in~\cite{HarriganEinsteinincompletenessepistemic2010} as a test for
whether an interpretation admits at least some quantum states that are
not uniquely determined by the underlying state of
reality. Following~\cite{KaranjaiContextualityboundsefficiency2018},
we account for potential preparation contextuality by defining
\begin{equation}
  \label{eq:36}
  \Delta_\psi =\bigcup_{P_\psi\in\mathcal P_\psi}\supp(\mu(\vdot|P_\psi))
\end{equation}
where $\mathcal P_\psi\subseteq \mathcal P$ is the set consisting of
every possible preparation of $\ket\psi$. We refer to $\Delta_\psi$ as
the support of a state $\ket\psi$, to distinguish it from the support
of a particular preparation $P_\psi$. Then a pair of states
$\ket\psi,\ket\phi$ is \emph{ontologically distinct} in a particular
model if $\Delta_\phi\cap\Delta_\psi=\emptyset$, and
\emph{ontologically indistinct} otherwise\footnote{For this purposes
  of this paper we assume there exists a measure that is absolutely
  continuous with respect to all other measures in the ontological
  model. Therefore, we can work with probability densities, rather
  than the full measure-theoretic treatment. While this assumption is
  not strictly true in all of our models, it does not affect our
  results and significantly simplifies our presentation.}. This leads
us to the standard definition of a $\psi$-epistemic ontological
model~\cite{HarriganEinsteinincompletenessepistemic2010,Leiferquantumstatereal2014}:
\begin{definition}[$\psi$-epistemic]
  \label{def:psi-epistemic}
  An ontological model is $\psi$-epistemic if there exists a pair of
  states $\ket\psi,\ket\phi$ that are ontologically indistinct;
  i.e.
  \begin{equation}
  \exists \ket\psi,\ket\phi:\Delta_\phi\cap\Delta_\psi\neq\emptyset.
\end{equation}

\end{definition}
As noted in~\cite{Leiferquantumstatereal2014}, this definition is
highly permissive in the sense that, if an ontological model were to
contain only a single pair of quantum states that are ontologically
indistinct, then it would not achieve the full explanatory power
expected of the $\psi$-epistemic viewpoint; this is exactly the case
with the \abclz model discussed in Section~\ref{sec:abcl-model}.

There are, however, proposals to strengthen the notion of
$\psi$-epistemicity, two of which are relevant to our
discussion~\cite{Leiferquantumstatereal2014,MontinaCommentmathoverflow2012}.
\begin{definition}[Pairwise $\psi$-epistemic]
  \label{def:pairwise}
  An ontological model is \emph{pairwise $\psi$-epistemic} if, for all
  pairs $\ket\psi,\ket\phi$ of nonorthogonal quantum states,
  $\ket\psi$ and $\ket\phi$ are ontologically indistinct.
\end{definition}
\begin{definition}[Never $\psi$-ontic]
  \label{def:never-ontic}
  An ontological model is \emph{never $\psi$-ontic} if every ontic
  state $\lambda\in\Lambda$ is in the support of at least two quantum
  states:
  \begin{equation}
    \label{eq:11}
    \forall\lambda\in\Lambda: \exists\psi,\phi: \lambda\in\Delta_\psi\cap\Delta_\phi.
  \end{equation}
\end{definition}
Note that both of these definitions imply the weaker notion of
$\psi$-epistemicity, but are independent from one another.

\subsection{Some easy cases of state update rules}
\label{sec:some-easy-cases}
There are two cases in which, given a prepare-and-measure-once
ontological model for quantum theory, we can always augment it with a
state update rule for measurement. First, if we only include rank-1
projective measurements in the subtheory we're modeling, we can simply
re-prepare in the measured (unique, pure) state:
\begin{equation}
  \label{eq:19}
  \eta(\lambda'|k,\lambda,M_{\{\Pi_i\}}) = \mu(\lambda'|P_{\Pi_k}) \text{ for }\tr(\Pi_k)=1.
\end{equation}
This is normalized for all $\lambda$ since $\mu$ is normalized, and
faithfully reproduces quantum statistics since $\mu$ does. It is
independent of the previous state $\lambda$, which, besides being
unsatisfying, is also not possible in general (this follows from
Section~\ref{sec:main-thm}).

Second, it is quick to prove, again by construction, that $\psi$-ontic
models can always be given a state update rule. Since there is a
unique quantum state $\ket{\psi_\lambda}$ associated with every ontic
state $\lambda$, we can define
\begin{equation}
  \label{eq:20}
  \eta(\lambda'|k,\lambda,M) = \mu\qty(\lambda'\bigg|P=\frac{M_k\proj{\psi_\lambda}M_k^\dag}{\tr(M_k\proj{\psi_\lambda}M_k^\dag)}).
\end{equation}
Again, normalization and faithfulness follow because $\mu$ has these
properties. Note that this construction works for any kind of
measurement, not just projective measurements.

These observations together suggest that in order to find anything
interesting involving state update, we ought to examine higher-rank
measurements in $\psi$-epistemic models. This suspicion will be
confirmed by the main result of this paper, which applies to exactly
these types of measurements and models.

\section{The consequences of state update}

\label{sec:main-thm}
We now prove our central claim that consideration of a rule for state
update under measurement has consequences for the response function of
an ontological model, so that consistent state update puts
restrictions on how one may represent even a prepare-and-measure-once
experiment. We begin with a lemma that articulates a general property
of the update rule $\eta$, and then examine its consequences for
response functions $\xi$.

\begin{lemma}
  \label{lem:littlelemma}
  Suppose we have an ontological model with ontic space $\Lambda$,
  preparation distributions $\mu(\lambda|P)$, indicator functions
  $\xi(k|\lambda,M)$, and state update maps
  $\eta(\lambda'|k,\lambda,M)$. For a particular ontic state $\lambda$
  and measurement projector $\Pi$, we define the set $S_{\lambda,\Pi}$
  of quantum states that one could obtain after measurement of any
  quantum state consistent with $\lambda$:
  \begin{equation}
    \label{eq:32}
    S_{\lambda,\Pi}=\left\{\frac{\Pi\ket\phi}{\sqrt{\ev{\Pi}{\phi}}}\:\middle|\:\forall\ket\phi:\lambda\in\Delta_\phi\right\}.
  \end{equation}
  It is then true that, independently of the measurement context of
  $\Pi$,
  \begin{equation}
    \label{eq:2}
    \supp(\eta(\vdot|\lambda,\Pi)) \subseteq \bigcap_{\ket\psi\in S_{\lambda,\Pi}}\Delta_\psi.
  \end{equation}
\end{lemma}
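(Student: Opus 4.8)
The plan is to exploit the fact that, for any quantum state $\ket\phi$ with $\lambda\in\Delta_\phi$, the experimental procedure ``prepare $\ket\phi$, perform the measurement containing $\Pi$, and condition on obtaining the outcome $\Pi$'' is itself a legitimate preparation procedure for the post-measurement state $\ket\psi=\Pi\ket\phi/\sqrt{\ev{\Pi}{\phi}}\in S_{\lambda,\Pi}$. In the ontological model, if $P_\phi$ is a preparation of $\ket\phi$ then this procedure produces the ontic distribution
\begin{equation*}
  \mu(\lambda'\,|\,P'_\psi)\ \propto\ \int_\Lambda \dd{\lambda''}\,\eta(\lambda'|\lambda'',\Pi)\,\xi(\Pi|\lambda'')\,\mu(\lambda''|P_\phi),
\end{equation*}
normalized by the probability $\ev{\Pi}{\phi}$ of the outcome. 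The extended consistency condition with quantum theory for this (sequential) experiment forces $\mu(\vdot|P'_\psi)$ to be a valid representation of $\ket\psi$, and hence $\supp(\mu(\vdot|P'_\psi))\subseteq\Delta_\psi$ by the very definition of $\Delta_\psi$ as the union of preparation supports over all preparations of $\ket\psi$.

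From there the deduction is short. First I would fix an arbitrary measurement context for $\Pi$ and an arbitrary $\ket\phi$ with $\lambda\in\Delta_\phi$, so that some $P_\phi$ has $\mu(\lambda|P_\phi)>0$; note also $\xi(\Pi|\lambda)>0$, since $\eta(\vdot|\lambda,\Pi)$ is only defined on $\supp(\xi(\Pi|\vdot))$. Then for any $\lambda'\in\supp(\eta(\vdot|\lambda,\Pi))$ the integrand above is strictly positive at $\lambda''=\lambda$, so $\lambda'\in\supp(\mu(\vdot|P'_\psi))\subseteq\Delta_\psi$. Since $\ket\phi$ (and hence the corresponding $\ket\psi$) was arbitrary, intersecting over all $\ket\psi\in S_{\lambda,\Pi}$ yields $\supp(\eta(\vdot|\lambda,\Pi))\subseteq\bigcap_{\ket\psi\in S_{\lambda,\Pi}}\Delta_\psi$. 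Nothing in the argument referred to the chosen context, which supplies the ``independently of the measurement context'' clause.

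The step I expect to be the main obstacle is the pointwise inference ``integrand positive at the single point $\lambda''=\lambda$ $\Rightarrow$ $\lambda'\in\supp(\mu(\vdot|P'_\psi))$'': a single ontic state can a priori be a measure-zero set, so one must route through the absolute-continuity/density assumption recorded earlier (and developed in Appendix~\ref{sec:hmm-appendix}) to make ``support'' behave pointwise, or else argue that strict positivity at $\lambda$ propagates to a positive-measure neighborhood. A secondary point needing care is the degenerate case $\ev{\Pi}{\phi}=0$, where $\ket\psi$ is undefined and that $\ket\phi$ simply contributes no element to $S_{\lambda,\Pi}$; one should check (again using the density assumption) that this is consistent with having $\lambda\in\Delta_\phi\cap\supp(\xi(\Pi|\vdot))$. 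Finally, one must invoke the extended consistency condition for sequential measurements from the appendix to legitimately treat $P'_\psi$ as a preparation constrained to represent $\ket\psi$.
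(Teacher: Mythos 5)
Your proposal is correct and follows essentially the same route as the paper: treat the post-selected measurement outcome as a legitimate preparation procedure for $\ket\psi$, use the consistency condition to force the resulting ontic distribution into $\Delta_\psi$, and intersect over all $\ket\psi\in S_{\lambda,\Pi}$. The only real difference is the final step: where you argue via pointwise positivity of the convolved density (and correctly flag the measure-zero worry), the paper instead integrates the normalization condition $\int_{\Delta_\psi}\dd{\lambda'}\mu(\lambda'|P_{M,k,P_\phi})=1$ against $\mu(\vdot|P_\phi)$ to conclude $\int_{\Delta_\psi}\dd{\lambda'}\eta(\lambda'|k,\lambda,M)=1$ for all $\lambda\in\Delta_\phi$, so that $\supp(\eta(\vdot|k,\lambda,M))\subseteq\Delta_\psi$ follows from normalization alone, without needing a single ontic state to carry positive measure.
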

\begin{proof}
  Suppose that measuring a state $\ket\phi$ with a measurement $M$
  results in the updated state $\ket\psi$ when we get outcome $k$,
  where $\Pi$ is the $k$th projector in $M$. Then let
  $P_{M,k,P_\phi}\in\mathcal P_\psi$ be the preparation procedure
  associated with post-selection of this measurement outcome after a
  particular preparation $P_\phi$. It must be normalized on
  $\Delta_\psi$:
  \begin{align}
    1 &= \int_{\Delta_\psi}\dd{\lambda'}\mu(\lambda'|P_{M,k,P_\phi})\nonumber\\
      &= \int_{\Delta_\psi}\dd{\lambda'}\int_{\Delta_\phi}\dd{\lambda}\eta(\lambda'|k,\lambda,M)\mu(\lambda|P_\phi)\nonumber\\
      &= \int_{\Delta_\phi}\dd{\lambda}\mu(\lambda|P_\phi)\int_{\Delta_\psi}\dd{\lambda'}\eta(\lambda'|k,\lambda,M)\nonumber
  \end{align}
  Since $\eta$ is always positive, normalization of
  $\mu(\lambda|P_\phi)$ then implies that
  \begin{equation}
    \label{eq:6}
    \int_{\Delta_\psi}\dd{\lambda'}\eta(\lambda'|k,\lambda,M) = 1 \qquad\forall\lambda\in\Delta_{\phi}.\nonumber
  \end{equation}
  If $\eta$ is normalized on a region, its support must be contained
  in that region. Thus for all $\lambda$ that are consistent with some
  preparation $\ket\phi$ that could result in the post-measurement
  state $\ket\psi$,
  \begin{equation}
    \label{eq:7}
    \supp(\eta(\vdot|k,\lambda,M)) \subseteq \Delta_\psi.\nonumber
  \end{equation}
  The fact that this is true for all $\ket\psi$ that could result from the
  measurement leads to Eq.~\ref{eq:2}.
\end{proof}
 
We note that Lemma~\ref{lem:littlelemma} is easy to account for in
$\psi$-ontic theories and for rank-1 measurements, since in both cases
$S_{\lambda,\Pi}$ has a single element. This is why we were able to
write down update rules for these situations in
Section~\ref{sec:some-easy-cases}.  Outside of these trivial cases,
Eq.~(\ref{eq:2}) is a very restrictive condition; depending on the
structure of $S_{\lambda,\Pi}$, the intersection may be a very small
set. In particular, if any two of the post-selected quantum states are
orthogonal, then $S_{\lambda,\Pi}$ is empty.

\begin{theorem}[Main theorem]
  \label{thm:main-thm}
  Suppose that a projector $\Pi$ maps two states
  $\ket\alpha,\ket\beta$ to ontologically distinct states
  $\Pi\ket\alpha,\Pi\ket\beta$. Then the response function for $\Pi$
  cannot have support on the overlap of $\ket\alpha,\ket\beta$ for any
  measurement context of $\Pi$; i.e.
  \begin{equation}
    \label{eq:3}
    \Delta_{\Pi\ket\alpha}\cap\Delta_{\Pi\ket\beta}=\emptyset
    \implies \xi(\Pi|\lambda) = 0 \ \ \forall \lambda\in\Delta_\alpha\cap\Delta_\beta.
  \end{equation}
\end{theorem}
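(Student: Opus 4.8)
The plan is to argue by contradiction, feeding the hypothesis $\Delta_{\Pi\ket\alpha}\cap\Delta_{\Pi\ket\beta}=\emptyset$ directly into Lemma~\ref{lem:littlelemma}. Suppose, contrary to the claim, that for some fixed measurement context of $\Pi$ there is an ontic state $\lambda\in\Delta_\alpha\cap\Delta_\beta$ with $\xi(\Pi|\lambda)>0$. I will show that this forces the update rule $\eta(\vdot|\lambda,\Pi)$ to have empty support, which is impossible since $\eta$ is required to be normalized wherever it is defined.

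First I would unpack what $\xi(\Pi|\lambda)>0$ provides. By the definition of support (Eq.~\ref{eq:42}) it means $\lambda\in\supp(\xi(\Pi|\vdot))$, which is exactly the set on which $\eta(\vdot|\lambda,\Pi)$ is well-defined (see the discussion following Eq.~\ref{eq:42} and Appendix~\ref{sec:hmm-appendix}); a well-defined update rule is by assumption a normalized distribution on $\Lambda$, hence has nonempty support. Next I would locate two elements of the set $S_{\lambda,\Pi}$ of Eq.~\ref{eq:32}. Since $\lambda\in\Delta_\alpha$, the vector $\ket\alpha$ is among the $\ket\phi$ appearing in that definition; and since $\Pi\ket\alpha$ is, by hypothesis, a genuine (hence nonzero) state, we have $\ev{\Pi}{\alpha}>0$, so the normalized post-measurement vector $\Pi\ket\alpha/\sqrt{\ev{\Pi}{\alpha}}$ lies in $S_{\lambda,\Pi}$. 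The same reasoning with $\lambda\in\Delta_\beta$ puts $\Pi\ket\beta/\sqrt{\ev{\Pi}{\beta}}$ in $S_{\lambda,\Pi}$ as well.

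Then I would simply take the intersection guaranteed by the lemma: since $\Delta_\psi$ depends only on the quantum state and not on its normalization,
\[
  \supp(\eta(\vdot|\lambda,\Pi)) \subseteq \bigcap_{\ket\psi\in S_{\lambda,\Pi}}\Delta_\psi \subseteq \Delta_{\Pi\ket\alpha}\cap\Delta_{\Pi\ket\beta} = \emptyset,
\]
contradicting the normalization of $\eta(\vdot|\lambda,\Pi)$ established above. Because Lemma~\ref{lem:littlelemma} holds independently of the measurement context of $\Pi$, so does this conclusion, which is exactly Eq.~\ref{eq:3}.

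I expect the only delicate point to be the bookkeeping at the $\xi$--$\eta$ interface: the step that licenses treating $\eta(\vdot|\lambda,\Pi)$ as a bona fide normalized probability distribution precisely when $\xi(\Pi|\lambda)>0$ rests on the extended formalism in which $\eta$ is defined on $\supp(\xi)$ and nowhere else. Everything else is a short set-theoretic manipulation. A secondary, cheap point to confirm is that $\Pi\ket\alpha,\Pi\ket\beta\neq0$ so that they genuinely occur in $S_{\lambda,\Pi}$; this is automatic from the theorem's phrasing, which already speaks of $\Pi\ket\alpha$ and $\Pi\ket\beta$ as states. If one wanted the conclusion in the almost-everywhere sense natural to the paper's density conventions, I would additionally remark that $\xi(\Pi|\vdot)$ can be positive on a positive-measure subset of $\Delta_\alpha$ only when $\ev{\Pi}{\alpha}>0$, which follows from the consistency condition Eq.~\ref{eq:44} --- but at the level of densities this too is subsumed by the hypothesis that $\Pi\ket\alpha$ is a state.
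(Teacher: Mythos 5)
Your proof is correct and follows essentially the same route as the paper's: both apply Lemma~\ref{lem:littlelemma} to conclude that $\supp(\eta(\vdot|\lambda,\Pi))$ is empty on $\Delta_\alpha\cap\Delta_\beta$, and then use the fact that $\eta$ must be a normalized distribution wherever $\xi(\Pi|\lambda)>0$ to force $\xi(\Pi|\lambda)=0$. The only difference is presentational --- you phrase it as an explicit contradiction and spell out that $\Pi\ket\alpha/\sqrt{\ev{\Pi}{\alpha}}$ and $\Pi\ket\beta/\sqrt{\ev{\Pi}{\beta}}$ both lie in $S_{\lambda,\Pi}$, a detail the paper leaves implicit.
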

\begin{proof}
  Pick some $\lambda\in\Delta_\alpha\cap\Delta_\beta$. By
  Lemma~\ref{lem:littlelemma},
  $\supp(\eta(\vdot|\lambda,\Pi))=\emptyset$ so
  $\eta(\lambda'|\lambda,\Pi)$ is not normalizable. As discussed in
  Section~\ref{sec:ont-models}, this is only allowable if
  $\xi(\Pi|\lambda)=0$.
\end{proof}

Both the lemma and the theorem hold for non-projective measurements as
well. We emphasize that this result does not say anything
\emph{directly} about the overlap of the supports of quantum states,
just their overlap within the support of a particular response
function. In the following section, we deploy this theorem by showing 
that, in every known $\psi$-epistemic model for $d\geq3$, measurements
of this type exist \emph{and} have support on the relevant overlaps,
leading to contradiction and demonstrating that these models cannot
reproduce state update under measurement.

\section{Examples of state update under measurement (or its impossibility)}
\label{sec:exampl-ontol-models}
We provide a number of examples of ontological models from the
literature, illustrating some of the properties described in the
previous sections. For each model, we either specify its state update
rule or prove that it cannot reproduce state update. Although many of
these models can be easily defined for arbitrary types of
measurements, we only consider projective measurements for simplicity
and notational consistency.

\begin{figure*}[t]
  \centering
 \begin{subfigure}[t]{0.24\textwidth}
 \includegraphics[width=\textwidth]{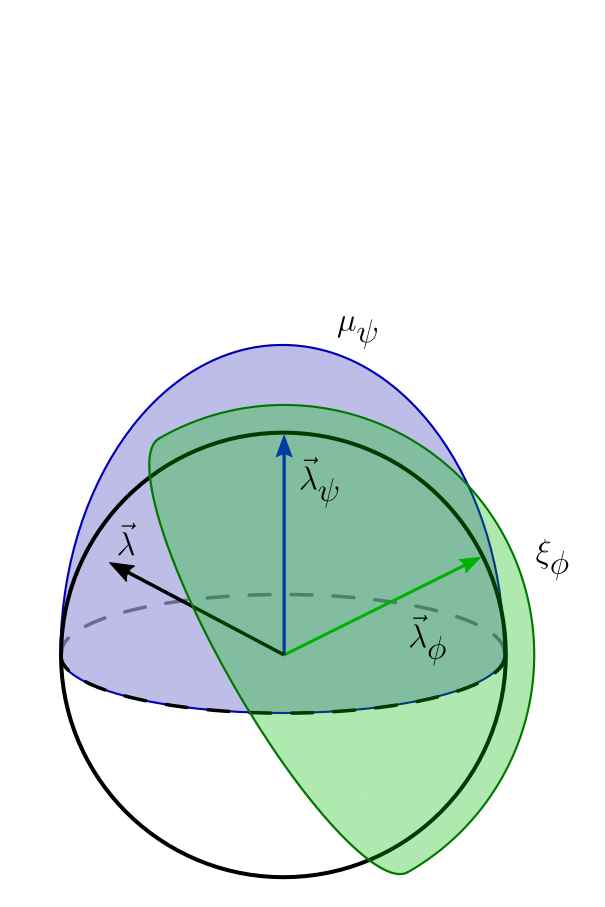}
 \caption{Kochen-Specker model}
 \label{fig:KSmodel}
 \end{subfigure}
 \begin{subfigure}[t]{0.24\textwidth}
 \includegraphics[width=\textwidth]{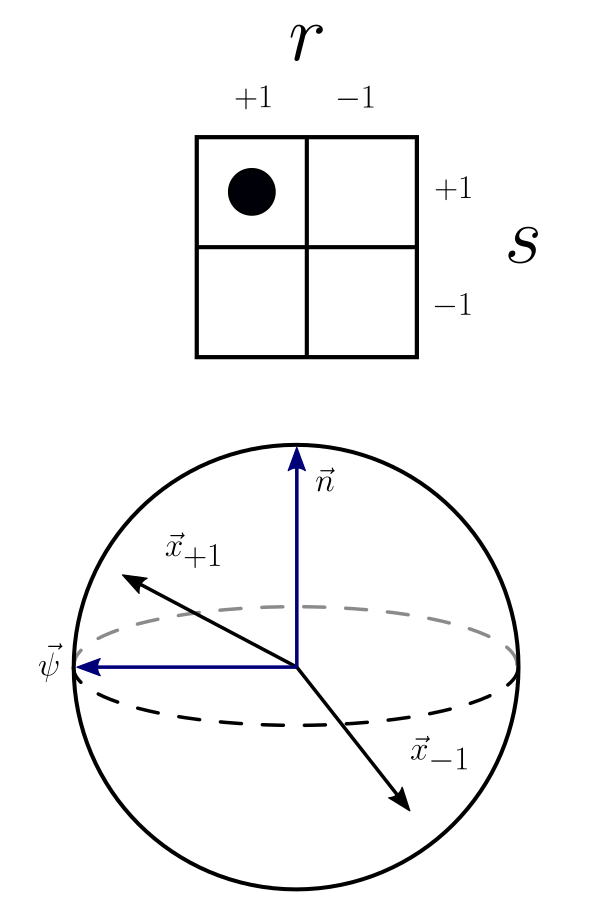}
 \caption{Montina model}
 \label{fig:Montmodel}
 \end{subfigure}
 \begin{subfigure}[t]{0.24\textwidth}
 \includegraphics[width=\textwidth]{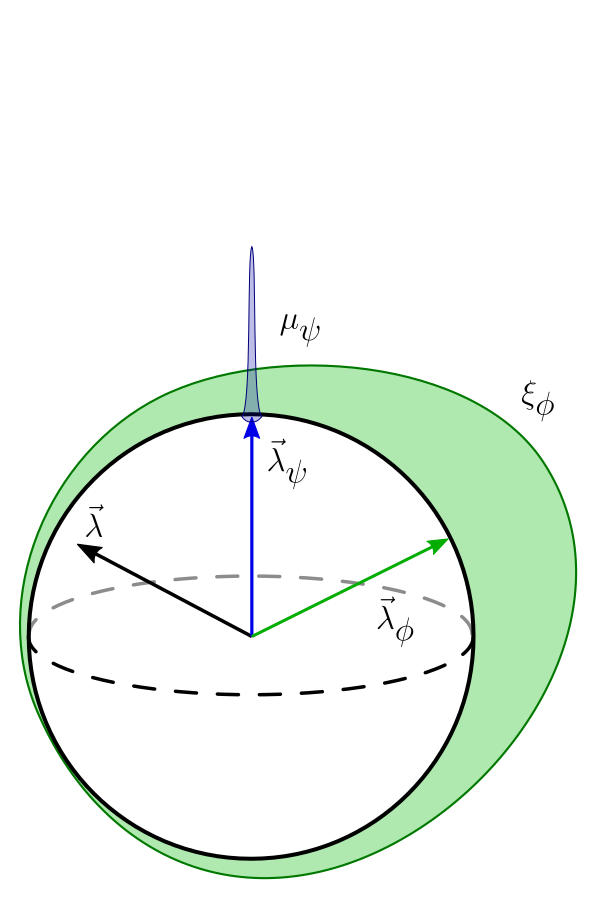}
 \caption{Beltrametti-Bugajski model}
 \label{fig:BBmodel}
 \end{subfigure}
 \begin{subfigure}[t]{0.24\textwidth}
 \includegraphics[width=\textwidth]{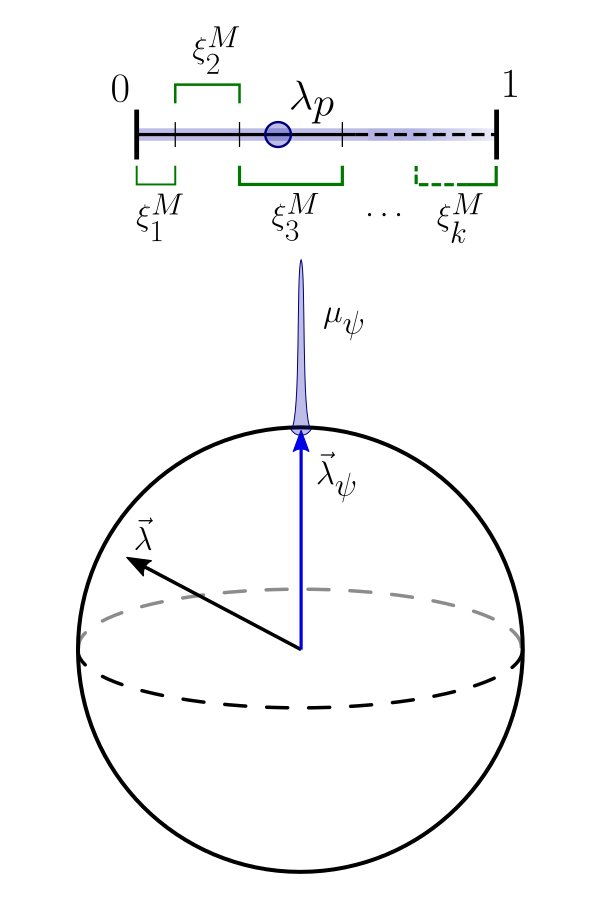}
 \caption{Bell Model}
 \label{fig:Bellmodel}
 \end{subfigure}
 \caption{Visualizations of the state space, preparation
   distributions, and response functions for (a) the Kochen-Specker
   model, (b) Montina's model, (c) the Beltrametti-Bugajski model for
   $d=2$, and (d) Bell's model for $d=2$. Blue represents the support
   of preparations, and green the support of the response functions,
   where possible. Black objects are generic elements of the state
   space.}
  \label{fig:models}
\end{figure*} 

\subsection{$\psi$-epistemic models of a qubit}
The only $\psi$-epistemic models that are able to represent state
update are those that restrict to modeling a qubit. This corresponds
to the fact that the only nontrivial projective measurements in a
qubit are rank-1, which, as described earlier, do not result in
restrictions from state update.

\subsubsection{Kochen-Specker model}
\label{sec:kochen-specker-model}
The Kochen-Specker model of a
qubit~\cite{KochenProblemHiddenVariables1967,Leiferquantumstatereal2014}
is an exemplar of what we look for in a $\psi$-epistemic theory, with
the unfortunate feature that it only works in $d=2$ dimensions. It is
both pairwise $\psi$-epistemic and never $\psi$-ontic, and provides a
very intuitively pleasing interpretation of the statistical nature of
quantum theory. We take the ontic space to be the unit sphere $S^2$,
and denote by $\vec\psi$ the Bloch vector corresponding to $\ket\psi$
under the usual mapping. Preparations and measurement outcomes are
represented by distributions over hemispheres, with response functions
uniform and preparation distributions peaked towards the center
(Fig.~\ref{fig:KSmodel}). Unitary transformations are represented by
rotations of the sphere. Since the only nontrivial measurements on a
qubit are rank-1 measurements, this is a case where we can use the
state update rule described in Eq.~\ref{eq:19} and just re-prepare the
measured state for our update rule:
\begin{align}
  \Lambda
  &= S^2\nonumber\\
  \mu(\vec\lambda|P_\psi)
  &=\frac{1}{\pi}\Theta(\vec\psi\cdot\vec\lambda)\vec\psi\cdot\vec\lambda\nonumber\\
  \Gamma(\vec\lambda'|\vec\lambda,T_U)
  &= \delta(\vec\lambda'-R_U\vec\lambda)\nonumber\\
  \xi(k|\vec\lambda,M_\phi)
  &= \Theta(k\vec\phi\cdot\vec\lambda)\nonumber\\
  \eta(\vec\lambda'|k,\vec\lambda,M_\phi)
  &= \frac{1}{\pi}\Theta(k\vec\phi\cdot\vec\lambda')k\vec\phi\cdot\vec\lambda'
\end{align}
Here $\Theta$ is the Heaviside step function, $R_U$ is the rotation of
the Bloch sphere corresponding to a unitary $U$, and
$k\in\{+1,-1\}$. This particular state update rule is not particularly
satisfactory in an explanatory sense, since it is independent of the
previous ontic state.

\subsubsection{Montina model}
\label{sec:montina-model}
In~\cite{MontinaDynamicsqubitclassical2012}, Montina introduces an
ontological model based on the Kochen-Specker model. The model was
constructed to show that state update in a qubit can be successfully
modeled by only updating a finite amount of information in the ontic
state. To do so, Montina extends the ontic space of the Kochen-Specker
model by taking two vectors $\vec x_{+1},\vec x_{-1}$ on the Bloch
sphere and adding two bits, labeled $r$ and $s$, such that the vectors
on the Bloch sphere are dynamic under transformations but remain
static under state update (Fig.~\ref{fig:Montmodel}).  The bit
$r\in\{-1,+1\}$ acts as an index which decides which of the two Bloch
vectors is `active;' $s\in\{-1,+1\}$ stores the result of a
hypothetical standard basis measurement on the state. We take the
standard basis to be defined by a special vector $\vec n$ pointing
along the $z$-axis.

As in the Kochen-Specker model, unitary transformations act by
rotating the Bloch vectors; additionally, if the vector $\vec x_r$
(i.e. the `active' Bloch vector) crosses the horizontal equator of the
sphere during this transformation, then the bit $s$ flips to $-s$. $r$
does not change during a transformation.

A measurement in the standard basis simply reveals the value of $s$,
and then updates $r$ so that the active vector is the one which was
more closely aligned with $\vec n$ at the time of measurement. For any
other basis, we apply the unitary that maps our desired measurement
basis to the standard basis, measure, and then rotate back---this
whole process has been wrapped into our definitions of $\eta$ and
$\xi$ below. In either case, the vectors $\vec x_{+1},\vec x_{-1}$ do not
change during measurement.

Finally, we prepare a state $\vec\psi$ by measuring in the basis
$\{\vec\psi,-\vec\psi\}$ and applying a rotation that maps
$-\vec\psi\to\vec\psi$ if we measured $-\vec\psi$. Summarizing these
constructions, we can write
\begin{align}
  \Lambda &= S^2\times S^2\times\{-1,+1\}\times\{-1,+1\}\nonumber\\
  \lambda &= (\vec x_{+1},\vec x_{-1}, r,s)\nonumber\\
  \mu(\lambda|P_\psi)
          &= \frac{1}{(4\pi)^2}  \Theta\left[ s\left(\vec x_{r}\cdot\vec\psi \right)\left(\vec x_{r}\cdot\vec n \right) \right]\nonumber\\
          &\quad\cdot\Theta\left[ r\left[
            \left( \vec x_{+1}\cdot\vec \psi \right)^2
            -\left( \vec x_{-1} \cdot\vec\psi \right)^2 \right] \right]\nonumber\\
  \Gamma(\lambda'|\lambda,T_U)
          &=
            \delta\left(\vec x_{+1}'-R_U\vec x_{+1}\right)
            \delta\left(\vec x_{-1}'-R_U\vec x_{-1}\right)\nonumber\\
          &\quad\cdot\Theta\left[ ss'\left( \vec x_r'\cdot \vec n\right)\left( \vec x_r\cdot \vec n \right) \right]
            \Theta[rr'] \nonumber\\
  \xi(k|\lambda,M_\phi) 
          &= \Theta\left[ ks\left( \vec x_r\cdot\vec n \right)
            \left( \vec x_r\cdot\vec \phi \right)\right]\nonumber\\
  \eta(\lambda'|k,\lambda,M_\phi)
          &= \delta\left(\vec x_{+1}'-\vec x_{+1}\right)
            \delta\left(\vec x_{-1}'-\vec x_{-1}\right)\nonumber\\
          &\quad\cdot\Theta\left[ ss'\qty( \vec x_r\cdot \vec n)\qty( \vec x_r\cdot \vec\phi)
            \qty( \vec x_{r'}\cdot \vec n)\qty( \vec x_{r'}\cdot \vec\phi)\right]\nonumber\\
          &\quad\cdot\Theta\left[ r'\left[
            \left( \vec x_{+1}\cdot\vec \phi \right)^2
            -\left( \vec x_{-1} \cdot\vec\phi \right)^2 \right] \right]
\end{align}
The original presentation is not stated in terms of the ontological
models formalism, and only explicitly models measurements in the
standard basis. This lead to the claim that state update under
measurement is accounted for by updating a single bit, but it is clear
from the form of $\eta$ above that by including all measurements in
our subtheory we have caused both bits to be updated during
measurement.

We include this model here for two reasons. First, it is one of the
few ontological models in the literature that has explicitly
considered state update under measurement. Second, it demonstrates
that the generic rank-1 update (Eq.~\ref{eq:19}) that we used for the
Kochen-Specker model is not the only possibility; even though all
measurements in this model are rank-1, $\eta$ has nontrivial
dependence on the previous ontic state $\lambda$. Thus just because we
\emph{can} construct a trivial update rule in some cases does not mean
that there is then nothing interesting to investigate. It also
includes these features while remaining pairwise $\psi$-epistemic and
never $\psi$-ontic.

\subsection{Models of full quantum theory for arbitrary dimension}

\subsubsection{Beltrametti-Bugajski model}
\label{sec:beltr-bugajski-model}
The Beltrametti-Bugajski
model~\cite{Beltrametticlassicalextensionquantum1995,Leiferquantumstatereal2014}
is perhaps the simplest ontological model that describes a system of
arbitrary dimension. Although it is $\psi$-ontic, it is the starting
point for the construction of the next three models in this
section. For a $d$-dimensional quantum system, we take the ontic space
to be the quantum state space, which we denote $\mathcal{PH}^{d-1}$
(the projective Hilbert space of dimension $d-1$). Preparations,
transformations, measurements, and state update rules then follow
directly from the usual quantum rules:
\begin{align}
  \Lambda
  &= \mathcal{PH}^{d-1}\nonumber\\
  \mu(\lambda|P_\psi)
  &=\delta(\ket\lambda-\ket\psi)\nonumber\\
  \Gamma(\lambda'|\lambda,T_U)
  &= \delta(\ket{\lambda'}-U\ket\lambda)\nonumber\\
  \xi(k|\lambda,M_{\{\Pi_i\}})
  &= \bra\lambda \Pi_k\ket\lambda\nonumber\\
  \eta(\lambda'|k,\lambda,M_{\{\Pi_i\}})
  &= \delta\qty(\ket{\lambda'}-\frac{\Pi_k\ket\lambda}
    {\sqrt{\bra\lambda \Pi_k\ket\lambda}})
\end{align}
This provides an example of the generic update-rule for $\psi$-ontic
models (Eq.~\ref{eq:20}), and is depicted in Fig.~\ref{fig:BBmodel}.

\subsubsection{Bell's model}
\label{sec:bells-model}
Lewis et al.~\cite{LewisDistinctQuantumStates2012} extended a model of
a qubit orignally proposed by
Bell~\cite{BellProblemHiddenVariables1966} to arbitrary dimension,
which can be seen as a modification of the Beltrametti-Bugajski
model~\cite{Leiferquantumstatereal2014}. The ontic space is the
Cartesian product of the projective Hilbert space with the unit
interval $[0,1]$. Now we write $\lambda$ as an ordered pair
$\lambda= (\ket\lambda,p_\lambda)$ where
$\ket\lambda\in\mathcal{PH}^{d-1}$, as in the Beltrametti-Bugajski
model, and $p_\lambda\in[0,1]$. Preparations remain essentially the
same, becoming a product distribution of a delta function on the
quantum state space with a uniform distribution over the unit
interval. The response functions divide up the unit interval into
lengths corresponding to probabilities of measuring each outcome, and
respond with outcome $k$ when $p_\lambda$ is in the corresponding
interval (Fig.~\ref{fig:Bellmodel}). This has the effect of making the
model outcome deterministic.
\begin{align}
  \Lambda&= \mathcal{PH}^{d-1}\times[0,1]\nonumber\\
  \mu(\lambda|P_\psi) &= \delta(\ket\lambda-\ket\psi)\nonumber\\
  \Gamma(\lambda'|\lambda,T_U)
         &= \delta(\ket{\lambda'}-U\ket\lambda)\nonumber\\
  \xi(k|\lambda,M_{\{\Pi_i\}})&= \Theta\qty[p_\lambda-\sum_{j=0}^{k-1}\tr(\Pi_j\proj\lambda)]\nonumber\\
         &\quad\cdot\Theta\qty[-p_\lambda+\sum_{j=0}^{k}\tr(\Pi_j\proj\lambda)]\nonumber\\
  \eta(\lambda'|k,\lambda,M_{\{\Pi_i\}})
         &= \delta\qty(\ket{\lambda'}-\frac{\Pi_k\ket\lambda}
           {\sqrt{\bra\lambda \Pi_k\ket\lambda}})
\end{align}
Since this model is still $\psi$-ontic, we once again use the generic
state update rule for $\psi$-ontic models. In this case, we can also
see that this works because of the structure of the preparations as
product distributions. Since every state has a uniform distribution
over $p_\lambda$, and this is uncorrelated with $\ket\lambda$, we can
just update $\ket\lambda$ according to the Beltrametti-Bugajski update
rule and then re-randomize uniformly over $p_\lambda$.

\subsubsection{LJBR model}
\label{sec:ljbr-model}

In~\cite{LewisDistinctQuantumStates2012}, Lewis et al. define a
$\psi$-epistemic model based on their generalization of Bell's model.
Referred to here as the LJBR model, it is motivated by the observation
that the order of segments in the response function of Bell's model
does not matter: a re-ordering of these segments allows arbitrary
modification of preparation distributions within a subset of the ontic
space, so they can be made to overlap. We present here a brief
description of the `most epistemic' version of this model, and refer
the reader to~\cite{LewisDistinctQuantumStates2012} for a more
thorough construction and motivation.

The LJBR model has the same ontic space as the Bell model, so we again
write ontic states as $\lambda=(\ket\lambda,p_\lambda)$. It is
constructed in a preferred basis $\{\ket j\}$, which we use in
defining two helper functions. First,
\begin{equation}
  \label{eq:17}
  z_j(\ket\lambda) = \inf_{\ket\phi:\tr(\proj j\proj\phi)\geq 1/d} \tr(\proj\lambda\proj\phi).
\end{equation}
Note that $z_j(\ket\lambda)>0$ if and only if
$\tr(\proj j\proj\lambda)>\frac{d-1}{d}$, so $z_j(\ket\lambda)$ is
nonzero for at most a single element of the preferred basis; we denote
this unique vector as $\ket{j_\lambda}$. Second, we define a
permutation $\pi_{M,\lambda}$ for each measurement $M$ and ontic state
$\lambda$:
\begin{align}
  \label{eq:28}
  \tr(M_{\pi_{M,\lambda}(0)}\proj{j_\lambda})&\geq \tr(M_{\pi_{M,\lambda}(1)}\proj{j_\lambda})\geq\cdots\nonumber\\
  \cdots&\geq \tr(M_{\pi_{M,\lambda}(\qty|M|-1)}\proj{j_\lambda}).
\end{align}
If there is no $\ket{j_\lambda}$, i.e. $z_j(\ket\lambda)=0$ for all
$j$, then we take $\pi_{M,\lambda}$ to be the identity
permutation. The final element we need before defining the model
itself is a set
\begin{equation}
  \label{eq:31}
  \mathcal E_j = \{\lambda:  z_j(\ket\lambda) > 0\}
\end{equation}
defined for each basis vector. Without further ado, the full
specification of the model:
\begin{align}
  \Lambda&= \mathcal{PH}^{d-1}\times[0,1]\nonumber\\
  \lambda&= (\ket\lambda,p_\lambda)\nonumber\\
  \mu(\lambda|P_\psi)&= \delta(\ket\lambda-\ket\psi)\prod_j\Theta[p_\lambda-z_j(\ket\psi)]\nonumber\\
  &\quad+ \sum_j z_j(\ket\psi)\mu_{\mathcal E_j}(\lambda) \nonumber\\
  \xi(k|\lambda,M_{\{\Pi_i\}})&= \Theta\qty[p_\lambda-\sum_{l=0}^{k-1}\tr(\Pi_{\pi_{M,\lambda}(l)}\proj\lambda)]\nonumber\\
  &\quad\cdot\Theta\qty[-p_\lambda+\sum_{l=0}^{k}\tr(\Pi_{\pi_{M,\lambda}(l)}\proj\lambda)]  
\end{align}
where $\mu_{\mathcal E_j}(\lambda)$ is the uniform distribution over
$\mathcal E_j$.

Roughly, all quantum states $\ket\psi$ with
$\tr(\proj j\proj \psi)>\frac{d-1}{d}$ will have support on
$\mathcal E_j$, and so will all overlap with each other. The
permutation included in the definition of the measurements is
constructed so that this shared support does not affect the
prepare-and-measure-once statistics: the measurement ordered first by
the permutation has a support which entirely contains $\mathcal
E_j$. This model is not pairwise $\psi$-epistemic, nor is it never
$\psi$-ontic. Note additionally that it was originally only defined
for rank-1 projective measurements, but it works just as well for
higher-rank projective measurements without modification.

This model is the first to fall to Theorem~\ref{thm:main-thm}:
\begin{theorem}
  The LJBR model cannot represent state update under measurement in
  dimension $d\geq3$.
\end{theorem}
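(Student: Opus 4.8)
The plan is to apply Theorem~\ref{thm:main-thm} directly by exhibiting, for $d\geq3$, a projector $\Pi$ and two quantum states $\ket\alpha,\ket\beta$ such that (i) $\ket\alpha$ and $\ket\beta$ are ontologically indistinct in the LJBR model, i.e.\ $\Delta_\alpha\cap\Delta_\beta\neq\emptyset$, but (ii) $\Pi\ket\alpha$ and $\Pi\ket\beta$, after normalization, are ontologically \emph{distinct}, i.e.\ $\Delta_{\Pi\ket\alpha}\cap\Delta_{\Pi\ket\beta}=\emptyset$, while (iii) $\xi(\Pi|\lambda)>0$ for some $\lambda\in\Delta_\alpha\cap\Delta_\beta$. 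Conditions (ii) and (iii) then contradict Eq.~\eqref{eq:3}, so no consistent state update rule $\eta$ can exist, and the model fails to reproduce sequential-measurement statistics.

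The natural place to look is the exceptional set $\mathcal E_j$. By the remark following Eq.~\eqref{eq:31}, every state $\ket\psi$ with $\tr(\proj j\proj\psi)>\tfrac{d-1}{d}$ has $z_j(\ket\psi)>0$ and therefore places weight $z_j(\ket\psi)$ on the uniform distribution $\mu_{\mathcal E_j}$; hence any two such states share support on $\mathcal E_j$ and are ontologically indistinct. So I would pick $\ket\alpha,\ket\beta$ both very close to a preferred basis vector $\ket j$ (this requires $d\geq3$ in the sense that we need room for the measurement constructed below — in $d=2$ the only projectors are rank-1 and Theorem~\ref{thm:main-thm} is vacuous since $S_{\lambda,\Pi}$ is a singleton). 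Now choose a rank-$\geq1$ projector $\Pi$ — concretely a rank-1 projector onto a state in the plane spanned by $\ket\alpha,\ket\beta$, or more cleanly a rank-$(d-1)$ projector — whose action separates $\ket\alpha$ and $\ket\beta$: arrange that $\Pi\ket\alpha/\lVert\Pi\ket\alpha\rVert$ and $\Pi\ket\beta/\lVert\Pi\ket\beta\rVert$ are orthogonal, or at least that they land on different preferred basis vectors $\ket{j_1}\neq\ket{j_2}$ with $\tr(\proj{j_1}\cdot)>\tfrac{d-1}{d}$ and $\tr(\proj{j_2}\cdot)>\tfrac{d-1}{d}$ respectively, which is exactly the condition guaranteeing the post-measurement states live in disjoint pieces $\mathcal E_{j_1}$, $\mathcal E_{j_2}$. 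One must check that in the LJBR model states concentrated near distinct basis vectors really are ontologically distinct, which follows because for $\ket\psi$ with $\tr(\proj{j}\proj\psi)>\tfrac{d-1}{d}$ the $\delta(\ket\lambda-\ket\psi)$ piece forces $\ket\lambda=\ket\psi$ and the $\mu_{\mathcal E_j}$ piece is confined to $\mathcal E_j$, and $\mathcal E_{j_1}\cap\mathcal E_{j_2}=\emptyset$ since $z_j$ is nonzero for at most one $j$.

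The remaining ingredient, condition (iii), is to confirm that $\xi(\Pi|\lambda)>0$ for some $\lambda$ in the shared support $\mathcal E_j$ of $\ket\alpha$ and $\ket\beta$. Here one uses the explicit form of $\xi$: for $\lambda\in\mathcal E_j$ we have a well-defined $\ket{j_\lambda}=\ket j$ and the permutation $\pi_{M,\lambda}$ orders the projectors by $\tr(\Pi_{\pi(l)}\proj j)$; the response function is then nonzero on an interval of $p_\lambda$-values of length $\tr(\Pi_k\proj\lambda)$ for the relevant outcome. Since $\tr(\Pi\proj\lambda)>0$ generically on $\mathcal E_j$ (we chose $\Pi$ so that it does not annihilate states near $\ket j$), there is a positive-measure set of $\lambda\in\mathcal E_j\subseteq\Delta_\alpha\cap\Delta_\beta$ with $\xi(\Pi|\lambda)>0$.

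The main obstacle is the bookkeeping in the previous paragraph: one has to choose $\Pi$ and the ambient measurement context $M$ so that \emph{simultaneously} (a) $\Pi$ separates $\ket\alpha,\ket\beta$ into different $\mathcal E_{j}$'s, and (b) the permutation $\pi_{M,\lambda}$ — which depends on $\lambda$ through $\ket{j_\lambda}$ and is designed precisely to hide the $\mathcal E_j$ overlap in prepare-and-measure-once statistics — does not inadvertently send $\Pi$ to a position where $\xi(\Pi|\lambda)=0$ on all of $\mathcal E_j$. I expect this is handled by taking $\ket\alpha,\ket\beta$ close enough to $\ket j$ that $\Pi$ has large overlap with $\proj j$, so $\Pi$ is ordered first by $\pi_{M,\lambda}$ and its response interval starts at $p_\lambda=0$, making $\xi(\Pi|\lambda)=\tr(\Pi\proj\lambda)>0$ throughout $\mathcal E_j$. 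Once these choices are pinned down the contradiction with Theorem~\ref{thm:main-thm} is immediate.
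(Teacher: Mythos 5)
Your high-level strategy is the same as the paper's: find $\ket\alpha,\ket\beta$ sharing support on some $\mathcal E_j$ and a projector $\Pi$ whose ontologically distinct post-measurement states force $\xi(\Pi|\lambda)=0$ on $\mathcal E_j$ via Theorem~\ref{thm:main-thm}, while the model's explicit $\xi$ is positive there. But the concrete construction has a genuine gap, located exactly where you flag ``the main obstacle.'' First, your rank-1 option cannot work in any dimension: a rank-1 projector maps every input to the \emph{same} normalized post-measurement state, so $\Pi\ket\alpha$ and $\Pi\ket\beta$ coincide and condition (ii) fails; you need $2\leq\mathrm{rank}\,\Pi\leq d-1$, which is the real reason $d\geq3$ is required. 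Second, your two remaining requirements are mutually exclusive under your construction. To send $\Pi\ket\alpha/\lVert\Pi\ket\alpha\rVert$ and $\Pi\ket\beta/\lVert\Pi\ket\beta\rVert$ to \emph{different} preferred basis vectors $\ket{j_1}\neq\ket{j_2}$ while $\ket\alpha,\ket\beta$ both lie within $\epsilon$ of $\ket j$, you need $\lVert\Pi\ket j\rVert=O(\epsilon)$ (otherwise both normalized post-measurement states collapse onto $\Pi\ket j/\lVert\Pi\ket j\rVert$ and are ontologically \emph{indistinct}). So $\tr(\Pi\proj j)$ must be nearly zero, which means your proposed fix --- ``take $\Pi$ with large overlap with $\proj j$ so it is ordered first by $\pi_{M,\lambda}$'' --- is unavailable: on $\mathcal E_j$ the permutation orders by $\tr(M_i\proj{j})$, so your $\Pi$ is ordered last, and you have not shown that a last-ordered outcome's response interval meets the shared support. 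You cannot simultaneously have ``$\Pi$ nearly annihilates $\ket j$'' and ``$\Pi$ is ordered first.''

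The paper resolves this tension by a different mechanism: instead of sending the post-measurement states into two different sets $\mathcal E_{j_1},\mathcal E_{j_2}$, it sends them \emph{outside every} $\mathcal E_{j'}$, into the part of the ontic space where all states are ontologically distinct. With $\Pi=\proj{X_{k_1}}+\proj{X_{k_2}}$ for two Fourier-basis vectors on which $\ket\alpha,\ket\beta$ differ, submultiplicativity gives $\tr(\proj{j'}\Pi\proj\alpha\Pi)/\tr(\Pi\proj\alpha)\leq\tr(\proj{j'}\Pi)=2/d\leq(d-1)/d$ for every $j'$, so the (non-identical) post-measurement states overlap nothing; and simultaneously $\tr(\Pi\proj{j'})=2/d>1/d$ guarantees $\Pi$ is ordered \emph{first} in the context of the remaining rank-1 Fourier projectors, so $\xi(\Pi|\lambda)=1$ on all of $\mathcal E_j$. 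To repair your version you would need either this device or a careful verification that a last-ordered, nearly-annihilating outcome still responds on a positive-measure subset of $\mathcal E_j$ lying in $\Delta_\alpha\cap\Delta_\beta$; as written the proposal establishes neither.
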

\begin{proof}
  The general idea of the proof is to find a measurement which maps
  any two nonidentical states $\ket\alpha,\ket\beta$ to two again
  nonidentical states $\Pi\ket\alpha,\Pi\ket\beta$ which both have no
  support on any of the $\mathcal E_j$, and so are ontologically
  distinct.

  Consider the preferred basis $\ket j$ of the LJBR model and the
  generalized $x$-basis defined by
  \begin{equation}
    \label{eq:12}
    \ket{X_k} = \frac{1}{\sqrt{d}}\sum_{j=0}^{d-1} \omega^{jk}\ket j,\qquad \omega=e^{2\pi i /d}.
  \end{equation}
  These $x$-basis states have the property
  $\tr(\proj{X_j}\proj k)=\frac{1}{d}$ for all $j,k$. There must exist
  two elements $\ket{X_{k_1}},\ket{X_{k_2}}$ of the $x$-basis such
  that $\ket\alpha,\ket\beta$ differ on that two-dimensional subspace
  or else $\ket\alpha,\ket\beta$ would be identical. Pick two such
  elements, and consider the projector
  $\Pi=\proj{X_{k_1}}+\proj{X_{k_2}}$. The quantum overlap of the
  post-measurement state $\Pi\ket\alpha$ with any basis vector
  $\ket j$ is, using the submultiplicativity of the trace,
  \begin{equation}
    \label{eq:13}
    \frac{\tr(\proj j\Pi\proj\alpha\Pi)}{\tr(\Pi\proj\alpha)}
    \leq\tr(\proj j\Pi) = \frac{2}{d} \leq \frac{d-1}{d}
  \end{equation}
  and the same is true for $\Pi\ket\beta$. As described above, only
  states with $\tr(\proj j\proj\psi)>\frac{d-1}{d}$ have overlap with
  any other states in the LJBR model, so the post-measurement states
  are ontologically distinct; by Theorem~\ref{thm:main-thm},
  $\xi(\Pi|\lambda) =0$ for all $\lambda\in\mathcal E_j$
  for all $j$ since $\ket\alpha$ and $\ket\beta$ were arbitrary.

  However, when measured in the context of the rest of the rank-1
  $x$-basis projectors, $\Pi$ will be ordered first by
  $\pi_{M,\lambda}$ for all $\lambda$ since
  \begin{equation}
    \label{eq:16}
    \tr(\Pi\proj j)=\frac{2}{d}>\frac{1}{d}
  \end{equation}
  for all $j$. Thus, by the construction of the LJBR model,
  $\xi(\Pi|\lambda)=1$ for all $\lambda\in\mathcal E_j$, resulting in
  a contradiction.
\end{proof}

\subsubsection{ABCL models}
\label{sec:abcl-model}
In~\cite{Aaronsonpsepistemictheoriesrole2013}, Aaronson et. al. construct
two $\psi$-epistemic models. The first, which we will call \abclz, is
very closely related to the LJBR model but is not identical; rather
than continuous regions of quantum states which overlap, this model
has exactly one pair of quantum states which are ontologically
indistinct. However, it gains the feature that \emph{any} two
nonorthogonal quantum states can be chosen as the single pair that
overlaps. With malice aforethought, we will call this defining pair
$\ket\alpha,\ket\beta$.

The second, \abclo, is a convex mixture (to be defined) of the \abclz
model constructed for all $\ket\alpha,\ket\beta$ and is intended to
demonstrate the possibility of a pairwise $\psi$-epistemic model. This
is the only known example of a pairwise $\psi$-epistemic model in
$d\geq3$, but it still is not never
$\psi$-ontic~\cite{Leiferquantumstatereal2014,MontinaCommentmathoverflow2012}. These
models have come under criticism for their ``unnaturalness,'' but we
show here that their problems go deeper due to an inability to
represent state-update.

We begin with \abclz, defining a couple of helper functions like in
the LJBR model. Rather than ordering measurements with respect to
traces with a preferred basis, we use the defining states
$\ket\alpha,\ket\beta$ and a function
\begin{equation}
  \label{eq:18}
  g_{\alpha\beta}(\Pi)=\min\{\tr(\Pi\proj\alpha),\tr(\Pi\proj\beta)\}.
\end{equation}
We now define a new permutation $\sigma_M$\footnote{To be precise, we
  should label this with $\alpha,\beta$ as well to emphasize that it
  belongs to the model defined by that particular pair of states.}
for each measurement $M$~\cite{Leiferquantumstatereal2014}:
\begin{equation}
  \label{eq:1}
  g_{\alpha\beta}(M_{\sigma_M(0)})\geq g_{\alpha\beta}(M_{\sigma_M(1)})\geq\cdots\geq g_{\alpha\beta}(M_{\sigma_M(|M|-1)}).
\end{equation}
With this, we can specify the \abclz model.
\begin{align}
  &\qquad\Lambda= \mathcal{PH}^{d-1}\times[0,1]\nonumber\\
  &\qquad\lambda= (\ket\lambda,p_\lambda)\nonumber\\
  &\mu(\lambda|P_\psi)% \nonumber\\
  % &
    =
    \begin{cases}                          
      \begin{aligned}
        &\bigg[\Theta(p_\lambda-\varepsilon)\delta(\ket\lambda-\ket\psi)\\
        &+\frac{1}{2}\Theta(\varepsilon-p_\lambda)
        \qty[\delta(\ket\lambda-\ket\alpha)+\delta(\ket\lambda-\ket\beta)]\bigg]
      \end{aligned}\\
      \hfill\text{if }\ket\psi=\ket\alpha,\ket\beta\\\\
      \delta(\ket\lambda-\ket\psi)
      \hfill\text{otherwise}
    \end{cases}\nonumber\\
  &\xi(k|\lambda,M_{\{\Pi_i\}})= \Theta\qty[p_\lambda-\sum_{j=0}^{k-1}\tr(\Pi_{\sigma_M(j)}\proj\lambda)]\nonumber\\
                              &\qquad\qquad\qquad\cdot\Theta\qty[-p_\lambda+\sum_{j=0}^{k}\tr(\Pi_{\sigma_M(j)}\proj\lambda)]
\end{align}
for $\varepsilon\leq\frac{|\braket{\alpha}{\beta}|}{d}$. Now the
preparation distributions for $\ket\alpha$ and $\ket\beta$ overlap on
$\{\ket\alpha,\ket\beta\}\times[0,\varepsilon]$; as in the LJBR model,
the permutation in $\xi$ ensures that the preparation change doesn't
affect the prepare-and-measure-once statistics by making sure
measurements whose support must contain this overlap region are
ordered first. Once again, this model fails to meet the conditions
required in order to faithfully represent state update:
\begin{theorem}
  The \abclz model cannot represent state update under measurement in
  dimension $d\geq3$.
\end{theorem}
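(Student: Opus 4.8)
The plan is to invoke Theorem~\ref{thm:main-thm} with the unique ontologically-indistinct pair $\ket\alpha,\ket\beta$ of the \abclz model playing the role of the two states in the theorem, and to exhibit a single higher-rank measurement that produces the required contradiction. First I would read off the overlap region explicitly. From the form of $\mu(\lambda|P_\alpha)$ and $\mu(\lambda|P_\beta)$ one gets $\Delta_\alpha=\{\ket\alpha\}\times[0,1]\cup\{\ket\beta\}\times[0,\varepsilon)$ and symmetrically for $\Delta_\beta$, so $\Delta_\alpha\cap\Delta_\beta=\{\ket\alpha,\ket\beta\}\times[0,\varepsilon)$, a set of positive measure since $\varepsilon>0$. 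I would also record the (immediate) fact that \abclz is ontologically trivial away from this one pair: for any $\ket\psi\notin\{\ket\alpha,\ket\beta\}$ one has $\Delta_\psi=\{\ket\psi\}\times[0,1]$, so two distinct rays are ontologically indistinct in \abclz only when the unordered pair of their normalizations is exactly $\{\ket\alpha,\ket\beta\}$.

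Next I would build the separating projector. Because $d\geq3$, $\mathrm{span}\{\ket\alpha,\ket\beta\}$ has nontrivial orthogonal complement, so I can choose a unit vector $\ket\chi$ with small but nonzero overlaps $|\braket\chi\alpha|^2,|\braket\chi\beta|^2<1/2$, picked generically enough that the post-measurement states $\Pi\ket\alpha$ and $\Pi\ket\beta$ for $\Pi:=\Id-\proj\chi$ are non-proportional to each other and neither normalizes to $\ket\alpha$ or $\ket\beta$ (these are all open dense conditions on $\ket\chi$ once $d\geq3$; for $d=2$ the construction degenerates, since $\Pi$ is then rank one, consistent with rank-one measurements never obstructing state update). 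Then $\Pi$ has rank $d-1\geq2$, and by the previous paragraph $\Pi\ket\alpha$ and $\Pi\ket\beta$ are ontologically distinct, so Theorem~\ref{thm:main-thm} forces $\xi(\Pi|\lambda)=0$ for every $\lambda\in\Delta_\alpha\cap\Delta_\beta$ and in every measurement context of $\Pi$.

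Finally I would contradict this by computing $\xi(\Pi|\lambda)$ in the two-outcome context $M=\{\Pi,\proj\chi\}$. Since $g_{\alpha\beta}(\Pi)=\min\{1-|\braket\chi\alpha|^2,1-|\braket\chi\beta|^2\}>1/2>\min\{|\braket\chi\alpha|^2,|\braket\chi\beta|^2\}=g_{\alpha\beta}(\proj\chi)$, the permutation $\sigma_M$ orders $\Pi$ first, and the \abclz definition of $\xi$ gives $\xi(\Pi|(\ket\lambda,p_\lambda))=\Theta(p_\lambda)\Theta\big(\tr(\Pi\proj\lambda)-p_\lambda\big)$. Taking $\ket\lambda=\ket\alpha$ and any $p_\lambda$ with $0<p_\lambda<\min\{\varepsilon,\tr(\Pi\proj\alpha)\}$ (nonempty since $\tr(\Pi\proj\alpha)=1-|\braket\chi\alpha|^2>1/2$ and $\varepsilon>0$) yields $\lambda\in\Delta_\alpha\cap\Delta_\beta$ with $\xi(\Pi|\lambda)=1$, contradicting the vanishing forced above. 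The only genuinely fiddly step is the genericity argument in the second paragraph—checking that one choice of $\ket\chi$ can simultaneously meet the overlap bounds and keep the post-measurement pair off $\{\ket\alpha,\ket\beta\}$—but this is routine, and everything else is a direct substitution into the model's defining formulas.
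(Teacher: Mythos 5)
Your proposal is correct and follows essentially the same strategy as the paper's proof: exhibit a higher-rank projector that sends $\ket\alpha,\ket\beta$ to ontologically distinct states (so Theorem~\ref{thm:main-thm} forces $\xi(\Pi|\lambda)=0$ on $\Delta_\alpha\cap\Delta_\beta$), then note that $g_{\alpha\beta}$ orders $\Pi$ first in a two-outcome context so the model's response function equals $1$ there. The only difference is the choice of projector --- the paper uses the rank-two $\Pi=\proj\alpha+\proj\gamma$ with $\braket{\alpha}{\gamma}=0$ and $0<|\braket{\gamma}{\beta}|^2<1-|\braket{\alpha}{\beta}|^2$, which makes $\Pi\ket\alpha=\ket\alpha$ and renders both halves of the contradiction immediate without your genericity argument --- but your rank-$(d-1)$ construction is equally valid.
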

\begin{proof}
  Call the two states defining the model $\ket\alpha,\ket\beta$. Let
  $\Pi=\proj\alpha+\proj\gamma$, where $\ket\gamma$ is some state such
  that
  \begin{equation}
    \braket{\alpha}{\gamma}=0 \quad\text{and}\quad
    0<\qty|\braket{\gamma}{\beta}|^2<1-\qty|\braket{\alpha}{\beta}|^2.\label{eq:22}
  \end{equation}
  Under this measurement, $\ket\alpha$ maps to $\ket\alpha$ and
  $\ket\beta$ does not get mapped to either $\ket\alpha$ or
  $\ket\beta$. Thus the post-measurement states are ontologically
  distinct, so by Theorem~\ref{thm:main-thm}, $\xi(\Pi|\lambda)=0$ for
  all $\lambda\in\Delta_\alpha\cap\Delta_\beta$.

  For the other half of the contradiction, note that
  $\tr(\Pi\proj\alpha)=1$ means
  $g_{\alpha\beta}(\Pi)=\tr(\Pi\proj \beta)>0$ (since
  $\ket\alpha,\ket\beta$ are nonorthogonal) and
  $g_{\alpha\beta}(\Id-\Pi)=1-\tr(\Pi\proj \alpha)=0$, so $\Pi$ is
  ordered first by $\sigma_M$ when measured in the context
  $M=\{\Pi,\Id-\Pi\}$. Thus $\xi(\Pi|\lambda)=1$ for all
  $\lambda\in\Delta_\alpha\cap\Delta_\beta$, resulting in a
  contradiction.
\end{proof}

We outline the \abclo model schematically and refer the reader
to~\cite{Leiferquantumstatereal2014,Aaronsonpsepistemictheoriesrole2013}
for details. Given two ontological models specified by
$\Lambda_1,\mu_1,\xi_1$ and $\Lambda_2,\mu_2,\xi_2$ respectively, the
authors define a convex combination of these models as
$\Lambda_3,\mu_3,\xi_3$ such that
\begin{align}
  \label{eq:33}
  \Lambda_3&=\Lambda_1\oplus\Lambda_2\nonumber\\
b  \mu_3&= p\mu_1+(1-p)\mu_2\nonumber\\
  \xi_3&= \xi_1+\xi_2
\end{align}
Here $p\in(0,1)$ is some mixing parameter. If there's overlap between
two states in either of models 1 or 2, then model 3 has overlap on
these states. The \abclo model is then defined essentially as a convex
mixture of the \abclz models for \emph{all} pairs
$\ket\alpha,\ket\beta$, taking care with respect to the uncountable
size of this set.

In order to include state update in a convex combination of
ontological models, the most obvious (and perhaps only) option is to
specify
\begin{equation}
  \eta_3=\eta_1+\eta_2.
\end{equation}
The failure of the \abclo model to reproduce state update follows
directly from the failure of the \abclz model.
\begin{theorem}
  The \abclo model cannot represent state update under measurement in
  dimension $d\geq3$
\end{theorem}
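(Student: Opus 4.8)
The plan is to reduce the claim to the preceding theorem by exploiting the block structure of the convex combination. The \abclo model is, by construction, a convex mixture of the \abclz models indexed by all nonorthogonal pairs $\ket\alpha,\ket\beta$; accordingly its ontic space splits as a direct sum (really a direct integral, but the uncountability only affects measure-theoretic bookkeeping, which the construction already handles) $\Lambda=\bigoplus_{\alpha\beta}\Lambda_{\alpha\beta}$, and by $\mu_3=\sum p_{\alpha\beta}\mu_{\alpha\beta}$, $\xi_3=\sum\xi_{\alpha\beta}$ the \abclo preparation distributions and response functions restrict on each summand $\Lambda_{\alpha\beta}$ to those of the copy of \abclz defined by $\ket\alpha,\ket\beta$ (the mixing weights being irrelevant for supports). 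The prescription $\eta_3=\eta_1+\eta_2$, applied to the full mixture, makes the \abclo update rule block diagonal as well: its restriction to $\Lambda_{\alpha\beta}$ is a candidate update rule $\eta_{\alpha\beta}$ for that copy.

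The key step is then to verify that the requirements on $\eta_3$ decompose block by block into the requirements on the $\eta_{\alpha\beta}$. For $\lambda\in\Lambda_{\alpha\beta}$ one has $\xi_3(k|\lambda,M)=\xi_{\alpha\beta}(k|\lambda,M)$, and --- because a measurement cannot carry an ontic state between summands of the direct sum --- $\eta_3(\lambda'|k,\lambda,M)=\eta_{\alpha\beta}(\lambda'|k,\lambda,M)$. Hence $\eta_3$ is normalizable on $\supp(\xi_3(k|\vdot,M))$ exactly when every $\eta_{\alpha\beta}$ is normalizable on $\supp(\xi_{\alpha\beta}(k|\vdot,M))$. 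Likewise, since $\mu_3,\xi_3,\eta_3$ are all block diagonal and every copy of \abclz reproduces the same quantum statistics, the post-selected preparation computed from the $\Lambda_{\alpha\beta}$-block of $\eta_3$ is precisely the representation, within that copy of \abclz, of the quantum post-measurement state; so faithfulness of $\eta_3$ is equivalent to faithfulness of every $\eta_{\alpha\beta}$.

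Granting this, the theorem follows at once: a valid \abclo update rule $\eta_3$ in dimension $d\geq3$ would furnish, for every nonorthogonal pair $\ket\alpha,\ket\beta$, a valid update rule $\eta_{\alpha\beta}$ for the corresponding copy of the \abclz model. But the proof of the preceding theorem (that \abclz cannot represent state update when $d\geq3$) applies verbatim to each such copy --- the construction there works for an arbitrary defining pair --- so no such $\eta_{\alpha\beta}$ exists, a contradiction.

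The step I expect to be the main obstacle is the block-wise decomposition in the second paragraph: one has to argue that $\eta_3=\eta_1+\eta_2$ genuinely forces a block-diagonal update rule whose blocks are honest \abclz update rules, with no room for cross-block correlations between the mixed-in copies. This is precisely what makes the \abclo problem reduce to the already-settled \abclz problem, and it is the content of the earlier remark that $\eta_3=\eta_1+\eta_2$ is ``the most obvious (and perhaps only) option.'' (An alternative, more laborious route would instead invoke Theorem~\ref{thm:main-thm} directly inside \abclo, but there one must contend with the fact that in \abclo the only ontologically distinct pairs are orthogonal pairs, so one would have to construct, in $d\geq3$, a measurement $\Pi$ sending some overlapping pair to an orthogonal pair --- extra work that the block-reduction argument above avoids.)
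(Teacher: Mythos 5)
Your proposal is correct and takes essentially the same route as the paper's: both reduce the claim to the already-established failure of every \abclz copy via the block structure of the convex combination, the paper phrasing this as the propagation of a violation of Theorem~\ref{thm:main-thm} through the unions $\supp(\xi_3)=\supp(\xi_1)\cup\supp(\xi_2)$ and $\supp(\eta_3(\vdot|k,\lambda,M))=\supp(\eta_1(\vdot|k,\lambda,M))\cup\supp(\eta_2(\vdot|k,\lambda,M))$. The step you flag as the main obstacle --- that the update rule is genuinely block-diagonal, with no cross-block transitions --- is left at exactly the same level of stipulation in the paper (the prescription $\eta_3=\eta_1+\eta_2$ being ``the most obvious (and perhaps only) option''), so your argument matches the paper's own degree of rigor on that point.
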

\begin{proof}
  When we take a convex combination of models, we see that
  \begin{align*}
    \supp(\xi_3(k|\vdot,M))           = &\supp(\xi_1(k|\vdot,M)) \cup \supp(\xi_2(k|\vdot,M)) \\
    \supp(\eta_3(\vdot|k,\lambda,M)) & =\supp(\eta_1(\vdot|k,\lambda,M))                      \\
                                     & \qquad\qquad\cup \supp(\eta_2(\vdot|k,\lambda,M))
  \end{align*}
  Thus if either of models 1 or 2 violates Theorem~\ref{thm:main-thm},
  model 3 must violate it as well.  Since all of the \abclz models
  being mixed violate Theorem~\ref{thm:main-thm}, so must \abclo.
\end{proof}

\subsubsection{A note on transformations}

As Leifer notes, transformations also play a role in restricting the
structure of $\psi$-epistemic ontological models~\cite[Section
8.1]{Leiferquantumstatereal2014}. If an ontological model successfully
represents all unitary transformations, then
$\left|\braket{\psi}{\phi}\right|=\left|\braket{\psi'}{\phi'}\right|$
implies that $\ket\psi,\ket\phi$ are ontologically distinct if and
only if $\ket{\psi'},\ket{\phi'}$ are ontologically distinct. If the
model also includes all CPTP maps, then
$\left|\braket{\psi}{\phi}\right|\geq\left|\braket{\psi'}{\phi'}\right|$
implies that $\ket{\psi'},\ket{\phi'}$ are ontologically distinct if
$\ket\psi,\ket\phi$ are.

It immediately follows that the LJBR and \abclz models cannot
faithfully represent unitary transformations. In each model there
exist quantum states which are ontologically distinct from every other
state; pick one of these states, and it is easy to find examples of
pairs of ontologically distinct states with any inner product.

However, transformations cannot necessarily rule out the \abclo model:
since it is pairwise $\psi$-epistemic, the unitary condition could
\emph{in principle} be satisfied. That said, the transformation rule
would be complicated because it would have to map \emph{between}
models that are mixed together, so it is certainly an open question
whether this is actually possible.

\subsection{Models of subtheories}
Although we have dealt so far with models that include the full
quantum set of preparations, transformations, and measurements, there
is the possibility that we can retain $\psi$-epistemic models of
subtheories. It turns out that although the stabilizer subtheory can
be represented by a $\psi$-epistemic model, the more general Kitchen
Sink model which models any finite subtheory cannot in general
represent state update under measurement.

\subsubsection{Kitchen Sink model}
\label{sec:kitchen-sink-model}
The Kitchen Sink model is a $\psi$-epistemic ontological model for any
finite subtheory of quantum
theory~\mbox{\cite[Section~IIIC]{HarriganRepresentingprobabilisticdata2007}}. Given
a finite set of projective measurements $\M=\{M^{(i)}\}$, we choose
our ontic states to be a list of measurement outcomes
$\lambda=(\lambda_1,\lambda_2,\ldots,\lambda_\m)$. That is,
$\lambda_i=k$ means that if $M^{(i)}=\{\Pi^{(i)}_j\}$ is measured on
the ontic state $\lambda$, the outcome $\Pi_k$ will occur with
certainty.  For a system of dimension $d$, the maximum number of
projectors in any given measurement is $d$, so we pad all of our
measurements with $0$s until they have $d$ elements. The Kitchen Sink
model is then defined by
\begin{align}
  \Lambda            & = \Z_d^\m\nonumber\\
  \mu(\lambda|\psi)  & =\prod_{i=1}^\m\tr(\Pi^{(i)}_{\lambda_i}\proj\psi) \nonumber\\
  \xi(k|\lambda,M^{(i)}) & = \delta(k,\lambda_i)                 
\end{align}
The Kitchen Sink is pairwise $\psi$-epistemic for any subtheory, and
also never $\psi$-ontic if we include all pure states in our
subtheory. Transformations can additionally be modeled under the
assumption of a closed subtheory. 

We can only rule out the Kitchen Sink model for certain subtheories,
as it is easy to construct subtheories with trivial update rules
(e.g. by only including rank-1 measurements). That said, our
requirements are few and are satisfied by the multi-qu$p$it stabilizer
subtheory, arguably the most important subtheory of quantum theory.
Specifically, we only need to include two states
$\ket\alpha,\ket\beta$ and two measurements $M^{(1)}=\{\Pi,\Id-\Pi\},M^{(2)}$
satisfying
\begin{align}
  \braket{\alpha}{\beta} &\neq 0\label{eq:4}\\ 
  \ev{\Pi}{\alpha} &\neq 0\label{eq:5}\\
  \ev{\Pi}{\beta} &\neq 0\label{eq:8}\\
  % \bra\alpha\Pi\ket\beta &= 0\label{eq:9}\\
  M^{(2)}\text{ distinguishes }&\Pi\ket\alpha\text{ and }\Pi\ket\beta\label{eq:10}
\end{align}
The first is required because we don't expect orthogonal states to be
ontologically indistinguishable. The next two stipulate that there is
a nonzero chance of obtaining an outcome $\Pi$ when measuring
$\ket\alpha$ and $\ket\beta$, so that its support overlaps with their
support. The last condition implies that the post-measurement states
are ontologically distinct~\cite{Leiferquantumstatereal2014}.

\begin{theorem}
  For any finite subtheory containing states and measurements
  satisfying the conditions given in Eqs.~\ref{eq:4}--\ref{eq:10}, the
  Kitchen sink model cannot model state update under measurement.
\end{theorem}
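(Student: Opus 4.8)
The plan is to apply Theorem~\ref{thm:main-thm} directly to the states $\ket\alpha,\ket\beta$ and projector $\Pi$ from the hypothesis, and then contradict it using the explicit form of the Kitchen Sink response function. First I would check that $\lambda\in\Delta_\alpha\cap\Delta_\beta$ is nonempty for an appropriate $\lambda$: by Eq.~\ref{eq:5} and Eq.~\ref{eq:8}, the ontic state $\lambda$ whose $i=1$ component $\lambda_1$ is set to the index of $\Pi$ (say $\lambda_1=0$ in the context $M^{(1)}=\{\Pi,\Id-\Pi\}$) has $\mu(\lambda|\psi)=\prod_i\tr(\Pi^{(i)}_{\lambda_i}\proj\psi)>0$ for \emph{both} $\psi=\alpha$ and $\psi=\beta$, provided the remaining components $\lambda_i$ ($i\geq2$) are chosen so that each factor $\tr(\Pi^{(i)}_{\lambda_i}\proj\psi)$ is nonzero for both states simultaneously --- such a choice exists because for each measurement $M^{(i)}$, $\proj\alpha$ and $\proj\beta$ each have full trace so there is at least one outcome with nonzero weight, and one can argue (or simply restrict $\M$ to $\{M^{(1)},M^{(2)}\}$, which suffices) that a common nonzero outcome exists. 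The cleanest route is to note that we only need the two measurements $M^{(1)},M^{(2)}$ in the subtheory, so $\lambda=(\lambda_1,\lambda_2)$ and we need $\lambda_2$ to be an outcome of $M^{(2)}$ with $\ev{\Pi^{(2)}_{\lambda_2}}{\alpha}\neq0$ and $\ev{\Pi^{(2)}_{\lambda_2}}{\beta}\neq0$; the existence of such $\lambda_2$ is what requires a small argument (see below).

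Next I would invoke the hypothesis Eq.~\ref{eq:10}, which by the cited result of Leifer means $\Pi\ket\alpha$ and $\Pi\ket\beta$ are ontologically distinct, i.e. $\Delta_{\Pi\ket\alpha}\cap\Delta_{\Pi\ket\beta}=\emptyset$. Together with Eq.~\ref{eq:4} (so that $\ket\alpha,\ket\beta$ are the kind of pair the model was built to make indistinct) this is exactly the antecedent of Theorem~\ref{thm:main-thm}, which then forces $\xi(\Pi|\lambda)=0$ for every $\lambda\in\Delta_\alpha\cap\Delta_\beta$. But from the explicit definition $\xi(k|\lambda,M^{(1)})=\delta(k,\lambda_1)$, and our chosen $\lambda$ has $\lambda_1$ equal to the index of $\Pi$, we get $\xi(\Pi|\lambda)=1$. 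This is the contradiction, and it shows no consistent state update rule $\eta$ can exist for this model.

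The one genuine gap to fill is the claim that $\Delta_\alpha\cap\Delta_\beta$ (restricted to the relevant fibre $\lambda_1=\,$index of $\Pi$) is nonempty --- i.e. that there is an outcome $\lambda_2$ of $M^{(2)}$ assigned positive weight by both $\ket\alpha$ and $\ket\beta$. I expect this to be the main obstacle, though a mild one. One approach: since $\tr(\proj\alpha)=\tr(\proj\beta)=1$ and $M^{(2)}$ is a complete projective measurement, the sets of $M^{(2)}$-outcomes with nonzero $\alpha$-weight and nonzero $\beta$-weight are each nonempty; if they were disjoint, then $\ket\alpha$ and $\ket\beta$ would be supported on orthogonal subspaces and hence orthogonal, contradicting Eq.~\ref{eq:4} --- so they intersect. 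Combined with Eqs.~\ref{eq:5},~\ref{eq:8} for the $M^{(1)}$ component, this pins down a legitimate $\lambda\in\supp(\mu(\vdot|\alpha))\cap\supp(\mu(\vdot|\beta))\subseteq\Delta_\alpha\cap\Delta_\beta$ with $\xi(\Pi|\lambda)=1$, completing the argument. Finally I would remark that extending from $\{M^{(1)},M^{(2)}\}$ to a larger finite $\M$ only enlarges $\Lambda$ and adds factors to $\mu$; one simply picks the extra components to be common-support outcomes by the same orthogonality argument, so the result is unaffected.
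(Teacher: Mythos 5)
Your proposal is correct and follows essentially the same route as the paper: invoke Theorem~\ref{thm:main-thm} via Eq.~\ref{eq:10}, then contradict it by exhibiting nonzero overlap of $\mu(\cdot|\alpha)$ and $\mu(\cdot|\beta)$ inside the support of $\xi(\Pi|\cdot)$, using the product form of the Kitchen Sink preparation distribution together with Eqs.~\ref{eq:4},~\ref{eq:5},~\ref{eq:8}. The paper phrases this as showing the integral $\int\xi\mu\mu$ factorizes into manifestly nonzero terms, while you construct an explicit witness $\lambda$; for a discrete ontic space these are the same argument, and your "common nonzero outcome exists or else the states would be orthogonal" step is exactly the content of the paper's appeal to Eq.~\ref{eq:4} and measurement completeness.
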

\begin{proof}
  Theorem~\ref{thm:main-thm} implies that, if $\Pi$ maps
  $\ket\alpha,\ket\beta$ to ontologically distinct states, then
  \begin{equation}
    \label{eq:21}
    \int_\Lambda\dd{\lambda} \xi(\Pi|\lambda)\mu(\lambda|\alpha)\mu(\lambda|\beta) = 0.
  \end{equation}
  We evaluate this quantity for the Kitchen Sink's response functions
  and preparation distributions, using the states and measurement $M^{(1)}$ satisfying
  Eqs.~\ref{eq:4}--\ref{eq:10}:
  \begin{align}
    \int_\Lambda&\dd{\lambda} \xi(k=0|\lambda,M^{(1)})\mu(\lambda|\alpha)\mu(\lambda|\beta)      \nonumber\\
                & =\sum_{\lambda\in\Z_r^\m} \delta(0,\lambda_1)\prod_{j=1}^\m\tr(M^{(j)}_{\lambda_j}\proj\alpha)\tr(M^{(j)}_{\lambda_j}\proj\beta) \nonumber\\
                & = \sum_{\lambda_1\in\Z_r}\delta(0,\lambda_1)\tr(M^{(1)}_{\lambda_1}\proj\alpha)\tr(M^{(1)}_{\lambda_1}\proj\beta)\nonumber\\
                &\qquad\qquad\cdot\prod_{j=2}^\m\sum_{l\in\Z_r}\tr(M^{(j)}_l\proj\alpha)\tr(M^{(j)}_l\proj\beta)     \nonumber\\
                & = \tr(M^{(1)}_0\proj\alpha)\tr(M^{(1)}_0\proj\beta)\nonumber\\
                &\qquad\qquad\cdot\prod_{j= 2}^\m\sum_{l\in\Z_r}\tr(M^{(j)}_l\proj\alpha)\tr(M^{(j)}_l\proj\beta)   \label{eq:15}
  \end{align}
  This final expression will be zero if and only if at least one of
  its factors is 0. The first two factors are nonzero by
  Eqs.~\ref{eq:5} and~\ref{eq:8}. The rest of the factors are nonzero
  due to Eq.~\ref{eq:4} and the completeness condition on the
  measurements.  Thus Eq.~\ref{eq:15} is nonzero and we have a
  contradiction, so state update cannot be represented faithfully.
\end{proof}

This demonstrates that Theorem~\ref{thm:main-thm} can create trouble
even in subtheories. In particular, the stabilizer subtheory satisfies
the requirements in Eqs.~\ref{eq:4}--\ref{eq:10}, so it cannot be
modeled by the Kitchen Sink. That said, we can show that the
stabilizer subtheory still supports a $\psi$-epistemic interpretation
using other models.

\subsubsection{Qu$p$it stabilizer subtheory}
\label{sec:quopit-wigner-model}

% Any model of single-qu$p$it stabilizer subtheory can be given a
% state update rule by Eq.~\ref{eq:19} since this subtheory contains
% only rank-1 measurements. For $p=2$, the Wigner function is negative
% and thus is not an ontological model. However, the 8-state
% model~\cite{WallmanNonnegativesubtheoriesquasiprobability2012}
% combines two Wigner functions in order to create an ontological
% model which is equivalent to the Kitchen Sink for this subtheory. Of
% course, for a single qubit we already have the Kochen-Specker model
% for the full theory, so this is not particularly surprising. For $p$
% an odd prime, which we refer to as a quo$p$it, the Wigner function
% serves as an ontological model as noted above. Both of these models
% are $\psi$-epistemic, and the models we will describe for $n$
% qu$p$its reduce to these cases for $n=1$.

We begin with the straightforward case of $n$ $p$-dimensional systems,
for $p$ an odd prime. In this case, the stabilizer subtheory has an
ontological model given by the discrete Wigner
function~\cite{VeitchResourceTheoryStabilizer2014,GrossHudsonTheoremfinitedimensional2006}
(see Appendix~\ref{sec:wigner-intro} for definitions of the stabilizer
subtheory and the phase-point operaters $A_\lambda$):
\begin{align}
  \Lambda &= \mathbb{Z}_p^n\times\mathbb{Z}_p^n\nonumber\\
  \mu(\lambda|P_\psi) &= \frac{1}{p^n}\tr(A_\lambda\proj\psi)\nonumber\\
  \xi(k|\lambda,M_{\{\Pi_i\}}) &= \tr(\Pi_kA_\lambda)
\end{align}
As the Wigner function is a quasi-probability
distribution~\cite{FerrieQuasiprobabilityrepresentationsquantum2011},
it necessarily takes on negative values if we try to model the full
quantum theory. If, however, we restrict to modeling preparations,
transformations, and measurements in the qu$p$it stabilizer subtheory,
then the representation is positive and it forms a well-defined
ontological
model~\cite{GrossNonnegativeWignerfunctions2007,GrossHudsonTheoremfinitedimensional2006}. It
is both pairwise $\psi$-epistemic and never $\psi$-ontic.

The stabilizer subtheory presents a challenge in that measuring a
single qu$p$it is described by a rank-$p^{n-1}$ measurement, which may
run into trouble due to Theorem~\ref{thm:main-thm}. In particular,
there are many examples in the stabilizer subtheory of the type of
measurements that broke the Kitchen Sink model. Nonetheless, we can
specify the update rule
\begin{align}
  \eta(\lambda'|k,\lambda,M_{\{\Pi_i\}}) &= \frac{1}{p^n}\frac{\tr(A_\lambda\Pi_k A_{\lambda'}\Pi_k)}{\tr(\Pi_kA_\lambda)}.
\end{align}
Note the normalization factor in $\eta$ which makes clear that $\eta$
is only defined in the support of $\xi$. The fact that this
successfully reproduces quantum statistics follows from the fact that
post-selected measurement is a completely positive map, and this is
how completely positive maps are represented in quasi-probability
representations~\cite{FerrieQuasiprobabilityrepresentationsquantum2011}.
$\eta$ is always positive for the stabilizer subtheory, though we do
not include the proof here.
 
For the case $p=2$, Lillystone and Emerson construct a 
$\psi$-epistemic model of the $n$-qubit stabilizer formalism that
successfully represents state update under
measurement~\cite{LillystoneContextualpsEpistemicModel2019}. This model starts
from the Kitchen Sink model and augments $\Lambda$ so that the
problematic overlaps of the Kitchen sink are removed. This model is
not pairwise $\psi$-epistemic, but a modified version (see appendix
of~\cite{LillystoneContextualpsEpistemicModel2019}) is never-$\psi$-ontic. We
don't present the construction here because it is significantly more
convoluted than the model above for $p\geq3$. This reflects the
often-observed ill-behaved nature of the qubit stabilizer subtheory.

Tangentially, if we extend the Wigner function to the full quantum
theory, we get negatively represented state update, as expected. One
consequence of this is that some state updates can't be normalized, so
the Wigner function state update must include a renormalization step
not allowed in ontological models or quasiprobability representations.
% For example, in a single qutrit
% where $\lambda$ takes the form $(x,z)\in\Z_3\times\Z_3$, the projector
% $\Pi=\proj0+\proj1$ has
% \begin{equation}
%   \tr(A_{(0,2)}\Pi A_{\lambda'}\Pi)
%   \begin{cases}
%     2/3 & \lambda'=(0,2)\\
%     -1/3 & \lambda'=(1,2)\\
%     -1/3 & \lambda'=(2,2)\\
%     0 & \text{otherwise}\\
%   \end{cases}
% \end{equation}
% so $\eta(\lambda'|\lambda,\Pi)$ is not normalizable for
% $\lambda=(0,2)$. Alternatively, the joint distribution
% $\Pr(k,\lambda'|\lambda,M)=\xi(k|\lambda,M)\eta(\lambda'|k,\lambda,M)$
% is always well-defined and normalizable, so if we restrict inquiry to
% that object then renormalization is not required (see
% Appendix~\ref{sec:hmm-appendix} for further discussion of the joint
% distribution).
Although further discussion of state update under measurement in
quasi-probability representations is beyond the scope of this paper,
we note that Theorem~\ref{thm:main-thm} does not hold for
quasi-probability representations so this could be one potential
direction for related future work.

\section{Discussion}
\label{sec:discussion}
  
We have demonstrated that state update under measurement poses a
serious challenge to $\psi$-epistemic interpretations of quantum
theory in the ontological models framework: all currently known
$\psi$-epistemic models for full quantum theory in $d\geq3$ cannot
faithfully represent state update. This runs in direct contrast to the
prevailing view that $\psi$-epistemic models provide a compelling
explanation of state update.

There are a number of remaining open questions. Most pressingly, we
have re-opened the possibility of proving a general $\psi$-onticity
result without additional assumptions---will the methods of this paper be
useful in doing so?

On the one hand, the proofs above do not rule out the possibility of
extending the ontology of the broken models in order to represent
state update under measurement while still retaining the epistemicity
of the model. This is exactly the route taken
in~\cite{LillystoneContextualpsEpistemicModel2019} for the $n$-qubit
stabilizer subtheory. Granted, the $n$-qubit stabilizer subtheory was
brought within an inch of $\psi$-onticity by this process, so it seems
unlikely that a similar technique will work for the full quantum
theory.

In the other direction, we've shown that consideration of state update
puts powerful constraints on the structure of $\psi$-epistemic models.
These restrictions would ideally lead to a categorical statement like
``$\psi$-epistemic models cannot represent state-update,'' but there
are challenges to achieving this conclusion. In particular, we note
that all of the $\psi$-epistemic models that we considered share the
property of outcome determinism, which means that
Theorem~\ref{thm:main-thm} may be less trouble in
non-outcome-deterministic models. At the very least, any no-go theorem
will have to include measurements, states, and/or transformations from
outside the stabilizer subtheory since we have shown that
$\psi$-epistemic models for this subtheory exist.

What import does our result have for the general interpretational
project of quantum theory? First of all, we have demonstrated that
$\psi$-epistemicists have yet another challenge to overcome: a
successful explanation of state-update. This is in contrast with the
usual claim that this arena is one where epistemic interpretations
have an advantage over ontic interpretations. As we emphasized in the
introduction, our results only strictly apply to interpretations that
can be described by the ontological models formalism, but there may be
a qualitative message for epistemic and doxastic interpretations that
are outside this formalism as well.
% At minimum, this message should be
% that state update, while rarely discussed outside of the context of
% the measurement problem or Wigner's friend, is an interesting area of
% investigation that has consequences even for topics that might at
% first glance seem unrelated.

\section{Acknowledgements}
This research was supported by the Canada First Research Excellence
Fund and the NSERC Discovery program. J.B.R. acknowledges funding from
an Ontario Graduate Scholarship. The authors would like to thank Matt
Leifer and David Schmid for productive discussions. We would also like
to thank two anonymous reviewers and our editor at Quantum, whose
remarks motivated the discussion in
Appendix~\ref{sec:additional-assumption}.

\bibliographystyle{unsrtnat} 
\bibliography{psi-onticity,contextuality,HMMs,various-foundations,just-for-onticity-paper}

\clearpage

\appendix

\section{Is state update an additional assumption?}
\label{sec:additional-assumption}

Throughout this paper, we have used the phrase `no additional
assumptions' in order to contrast our results with those of existing
no-go theorems. By this we don't mean to raise an argument over the
semantics of exactly what one means by `assumption,' but rather we
mean to say that the consideration of state update, which we propose
may be able to categorically rule out $\psi$-epistemic models, is very
different from the `additional assumptions' used in existing no-go
theorems.

To make this precise, we can break up our consideration of state
update into two parts. The first part is simply the fact that we are
considering a broader set of empirical situations than considered
previously: experiments involving sequential measurements. The
ontological models formalism, as originally defined, does not have the
tools to model this empirical situation, so we are forced to extend
the formalism to do so. This leads to the second part, which is an
assumption about \emph{how} we ought to model this empirical
situation. While, in principle, there are many ways one can do this,
we show in Appendix~\ref{sec:hmm-appendix} that there is only one way
to do this that is consistent with the broader conceptual
underpinnings (i.e. pre-existing assumptions) of the ontological
models formalism. By examining the core assumptions of the ontological
models formalism rather than its strict mathematical formulation, we
show that all of the structure associated with state update follows
directly from the pre-existing assumptions of the ontological models
formalism.

This is in contrast to the assumptions used in previously published
no-go theorems. First of all, they start from empirical situations
that can be described by the vanilla ontological models
formalism. Then, they impose a particular restriction on the form of
the ontological models that happens directly at the level of the ontic
state; for example, the PBR
theorem~\cite{Puseyrealityquantumstate2012} assumes the preparation
independence postulate (PIP) which says that says that, for two
subsystems $A$ and $B$, if the quantum state is a product state, then
the preparation distribution $\mu$ is a product distribution:
\begin{align}
  \label{eq:2} \Lambda_{AB} &= \Lambda_A\times\Lambda_B\\
&\text{and}\nonumber\\ \Psi_{AB}=\Psi_A\otimes\Psi_B
\quad&\nonumber\\\implies
\mu(\lambda_A,\lambda_B|\Psi_{AB})&=\mu(\lambda_A|\Psi_A)\cdot\mu(\lambda_B|\Psi_B).
\end{align} We can see from this symbolic expression that this is
indeed a direct assumption about the structure of the ontological
model, rather than an empirical consideration. More to the point, it
imposes structure on the ontological model that is not required by the
bare formalism.

As another example, the Colbeck-Renner
argument~\cite{Colbecksystemwavefunction2017} claims to not make any
assumptions of the kind described in the PIP; they do, however, assume
a spacetime structure which is, again, additional to the ontological
models formalism. Spacetime and locality arguments are also what
motivate the PIP. These relativistically-motivated considerations are
interesting in their own right, but since Bell's theorem already rules
out $\psi$-epistemic models based on reasonable spacetime structure
(i.e. the locality assumption), the goal of our paper is to arrive at
a similar conclusion \emph{without} imposing spacetime structure. In
particular, a $\psi$-epistemicity no-go theorem would directly imply
the results of Bell's theorem and is thus a stronger
result~\cite{Leiferquantumstatereal2014}.

Whether \emph{operational} state update, i.e. the L\"uders rule, is
considered an assumption of some kind or a consequence of other
operational axioms is irrelevant to our analysis.  The key point for
our analysis is that the state-update rule is an experimentally
validated operational feature/prediction of QM theory, much like
entanglement or unitary dynamics, and therefore is an empirical
feature of the QM framework that the framework of ontological models
ought to explain.  So just like the discovery from Bell that entangled
states lead to new constraints/insights into the necessary features of
($\psi$-epistemic) ontological models, similarly our insight is that
the dynamics associated with state-update also leads to new
constraints/insights into the necessary features of ($\psi$-epistemic)
ontological models.

\section{Justifying the form of the state update rule: ontological
  models as hidden Markov models of stochastic channels}
\label{sec:hmm-appendix}

In the context of state update under measurement, it is illuminating
to motivate the definition of an ontological model from the point of
view of the hidden Markov models (HMM) literature. We do this in order
to (a) provide a rigorous treatment of multiple-measurement scenarios
presented informally in Section~\ref{sec:ont-models} and (b) clarify
the assumptions that define the ontological models framework. In the
process, we use these assumptions to show that the form of our state
update rule is the only one consistent with the already-defined
components of the ontological models formalism.

We picture a quantum circuit as a memoryful stochastic channel
(Fig.~\ref{fig:circuit-channel}). The channel that we often discuss
with regards to a quantum circuit is the (quantum) channel that takes
the input quantum state and maps it to the output quantum state. For
present purposes, we will instead think of it as a channel from the
experimenter to individual measurement outcomes used repeatedly at
each time step. Pictorially, one might think of this as `rotating the
channel ninety degrees' in a circuit diagram.

\begin{figure}
  \centering
  \begin{tabular}{c}
    \Qcircuit @C=.5cm @R=.2em {
    &                     &                  &                   & \ustick{k} \cwx[1]   &                    & \ustick{k'} \cwx[3]   &     \\
    & \multimeasure{5}{P} & \multigate{3}{U} & \qw               & \multimeasureD{1}{M} & \qw                & \qw                   & \qw \\
    & \ghost{P}           & \ghost{U}        & \qw               & \ghost{M}            & \multigate{3}{U''} & \qw                   & \qw \\
    & \ghost{P}           & \ghost{U}        & \qw               & \qw                  & \ghost{U''}        & \multimeasureD{3}{M'} & \qw \\
    & \ghost{P}           & \ghost{U}        & \multigate{2}{U'} & \qw                  & \ghost{U''}        & \ghost{M'}            & \qw \\
    & \ghost{P}           & \qw              & \ghost{U'}        & \qw                  & \ghost{U''}        & \ghost{M'}            & \qw \\
    & \ghost{P}           & \qw              & \ghost{U'}        & \qw                  & \qw                & \ghost{M'}            & \qw
                                                                                                                                       }
  \end{tabular}

  \vspace{.5cm}
  \begin{tikzpicture}[node distance = 1.3cm, auto]
    \node (k0) {$k_0=0$};
    \node [below of=k0,node distance = 1.7cm] (a0) {$a_0=P$};

    \node [right of=k0] (k1) {$k_1=0$};
    \node [right of=a0] (a1) {$a_1=U$}; 

    \node [right of=k1] (k2) {$k_2=0$};
    \node [right of=a1] (a2) {$a_2=U'$}; 

    \node [right of=k2] (k3) {$k_3=k$};
    \node [right of=a2] (a3) {$a_3=M$};
    
    \node [right of=k3] (k4) {$k_4=0$};
    \node [right of=a3] (a4) {$a_4=U''$}; 

    \node [right of=k4] (k5) {$k_5=k'$};
    \node [right of=a4] (a5) {$a_5=M'$};

    \node[above left of=k0, node distance = 1cm] (line0) {};
    \node[above right of=k5,node distance = 1cm] (line1) {};
    \draw [dashed] (line0) -- (line1);
    % \draw [dashed] (node cs:name=k0,anchor=north) -- (node cs:name=k5,anchor=north); % (-1,.7) -- (9,.7);
    \node [single arrow,rotate=90, draw=none, font=\scriptsize,text width = .8cm, fill=violet!50] at (3.25,-.95) {channel};
  \end{tikzpicture}
  \caption{A quantum circuit can be pictured as a stochastic channel,
    as described in the text. The inputs to the channel $a_t$ are the
    choice of operation, and the outputs $k_t$ report the results of
    measurements.}
  \label{fig:circuit-channel}
\end{figure}
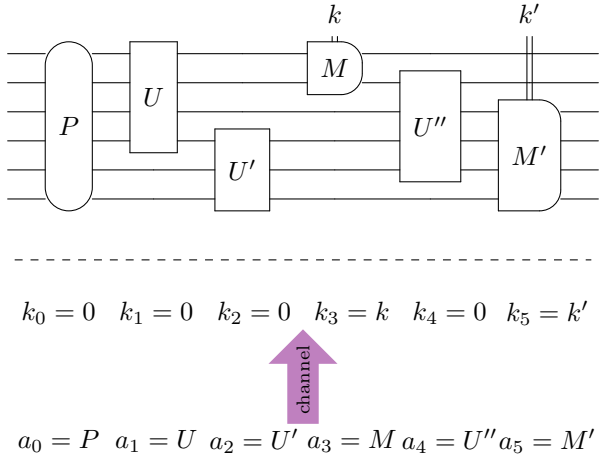

The input string $\olra{a_0}=\ldots a_{-2}a_{-1}a_0a_1a_2\ldots$ of
the channel is the experimenter's choice of action, which we take to
be a sequence of operational preparation, transformation, or
measurement procedures.  The output string $\olra{k_0}$ of the channel
reports either the results of measurements or a trivial output for
preparations and transformations. The subscript $0$ indicates in both
cases the time that we take as an origin/reference point. The channel
is then described by the conditional probability distribution
$\Pr(\olra{k_0}|\olra{a_0})$. Following~\cite{BarnettComputationalMechanicsInputOutput2015},
we denote substrings with $a_{t:t+L}=a_ta_{t+1}\ldots a_{t+L-1}$, and
also define the past $\ola{a_t}=a_{-\infty:t}$ and future
$\ora{a_t}=a_{t:\infty}$. We can now define two properties of
stochastic channels:
\begin{definition}[Stationary]
  \label{def:stationary}
  A \emph{stationary} channel is one that has time-translation
  symmetry, so statistics are not affected by our choice of
  time-origin:
  \begin{align}
    \label{eq:34}
    \Pr(k_{t:t+L}|\olra{a_t})=\Pr(k_{0:L}|\olra{a_o}) \qquad \text{and} \nonumber\\
    \Pr(\olra{k_t}|\olra{a_t})= \Pr(\olra{k_0}|\olra{a_0}) \qquad\forall t,L,\olra{a}.
  \end{align}
\end{definition}
\begin{definition}[Causal]
  \label{def:causal}
  A \emph{causal} channel is one for which a finite output substring
  depends only on input symbols in its past:
  \begin{equation}
    \Pr(k_{t:t+L}|\olra{a}) = \Pr(k_{t:t+L}|\ola a_{t+L}).\qquad\forall t,L,\ola{a}\label{eq:35}
  \end{equation}
\end{definition}
It is shown in~\cite{BarnettComputationalMechanicsInputOutput2015}
that a channel satisfying these two properties can be specified
entirely by the single-symbol recurrence relation
\begin{equation}
\label{eq:29}
  \Pr(k_0|a_0,\ola{a_0}, \ola{k_0}).
\end{equation}
Note that this does not imply a Markov process, since it depends in
general on the entire histories $\ola{a_0}$ and $\ola{k_0}$. All we
mean by single-symbol is that we are not specifying the probabilities
over the whole future, just a single output symbol. We now construct
an HMM as follows:

\begin{definition}[Hidden Markov Model]
  \label{def:hmm}
  A \emph{hidden Markov model} (HMM) of a stationary, causal channel
  is specified by an additional random variable $\lambda$ taking
  values in a state space $\Lambda$. It is given a joint probability
  distribution over $\olra{\lambda_0},\olra{a_0},\olra{k_0}$ so that
  the recurrence relation above (Eq.~\ref{eq:29}) becomes
\begin{equation}
  \label{eq:30}
  \Pr(k_0,\lambda_1|a_0,\lambda_0,\ola a_0, \ola k_0,\ola \lambda_0)
  =\Pr(k_0,\lambda_1|a_0,\lambda_0).
\end{equation}
\end{definition}
In other words, the state $\lambda$ renders the future conditionally
independent of the past and induces a Markov process over the state
space $\Lambda$ that mediates the channel statistics. This is the
property of the ontological models formalism called
``$\lambda$-mediation''
in~\cite{Leifertimesymmetricinterpretation2017}. An influence
diagram~\cite{HowardInfluenceDiagrams2005} of a stationary, causal
channel is shown in Figure~\ref{fig:influence-diagram} before and
after the specification of an HMM.

To see that specification of an HMM as in Eq.~\ref{eq:30} is
equivalent to the definition of an ontological model given in
Section~\ref{sec:ont-models}, we first note that generally we don't
think of preparations and transformations having output; to account
for this, we stipulate that they give a trivial, deterministic output
$k_0=0$. We then factor the probability distribution from
Eq.~\ref{eq:30} and look separately at the cases where $a_0$ is a
preparation, transformation, or measurement:
% define the set of preparations $\mathcal P$ as the set of
% actions $a_0$ for which $\Pr(k_0,\lambda_1|a_0,\lambda_0)$ is
% independent of the previous state $\lambda_0$ and has a trivial
% output, i.e. $k_0$ gives us no information. We define transformations
% in $\mathcal T$ as $a_0$ which have a trivial output but are not
% independent of the previous state. Finally, measurements in
% $\mathcal M$ have a non-trivial output. Summarizing,
\begin{align} 
  \Pr&(k_0,\lambda_1|\lambda_0,a_0)\nonumber\\
  &= \Pr(\lambda_1|k_0,\lambda_0,a_0)\Pr(k_0|\lambda_0,a_0)\nonumber\\
  &=
    \begin{cases}
      \mu(\lambda_1|P) \delta_{k_0,0} & a_0 =P\in \mathcal P\\
      \Gamma(\lambda_1|\lambda_0,T) \delta_{k_0,0} & a_0 =T\in \mathcal T\\
      \eta(\lambda_1|k_0,\lambda_0,M)\xi(k_0|\lambda_0,M) & a_0 =M\in \mathcal M
    \end{cases}
                                                            \label{eq:38}
\end{align}
where $\delta$ is the Kronecker delta. This can be seen as the
rigorous definition of $\eta$ which emerges naturally from this
recognition that ontological models are equivalent to hidden Markov
models. In particular, this gives a mathematical reason why $\eta$ is
only defined in the support of $\xi$. If we take the joint
distribution $\Pr(k_0,\lambda_1|\lambda_0,a_0)$ to be the more
fundamental object, then it is clear we can obtain $\xi$ directly by
marginalization
\begin{equation}
  \label{eq:37}
  \xi(k_0|\lambda_0,M)=\int_\Lambda \dd{\lambda_1} \Pr(k_0,\lambda_1|\lambda_0,M)
\end{equation}
which is always well defined, and then find $\eta$ by rearranging
Eq.~\ref{eq:38}:
\begin{equation}
  \label{eq:39}
  \eta(\lambda_1|k_0,\lambda_0,M) = \frac{\Pr(k_0,\lambda_1|\lambda_0,M)}{\xi(k_0|\lambda_0,M)}
\end{equation}
Thus clearly $\eta(\lambda_1|k_0,\lambda_0,M)$ is only well-defined
when $\xi(k_0|\lambda_0,M)\neq0$, which is how we defined its support.

\def\edgelength{1.8cm}
\def\diaglength{1.342cm}
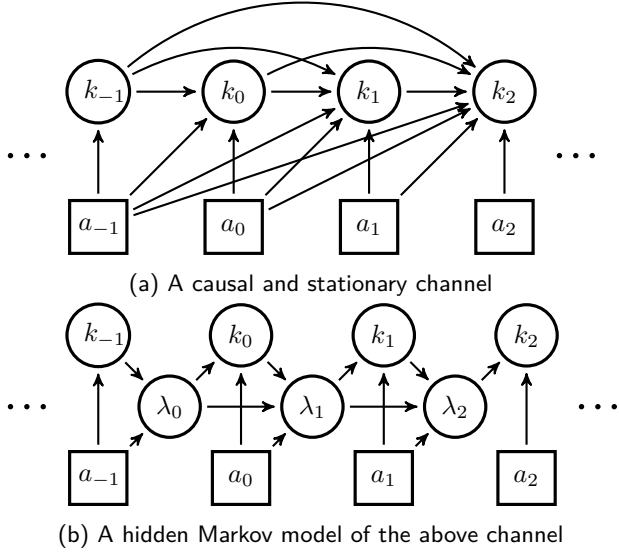
\begin{figure}
  \centering
  \begin{subfigure}[t]{0.48\textwidth}
    \begin{tikzpicture}[node distance = \edgelength, auto]
      % Place nodes
      \node [chance] (k2) {$k_2$};
      \node [chance, left of=k2] (k1) {$k_1$}
      edge[pil] (k2);
      \node [chance, left of=k1] (k0) {$k_0$}
      edge[pil,bend left=27] (k2)
      edge[pil] (k1);
      \node [chance, left of=k0] (k-1) {$k_{-1}$}
      edge[pil,bend left=40] (k2)
      edge[pil,bend left=27] (k1)
      edge[pil] (k0);
      \node [decision, below of=k2] (a2) {$a_2$}
      edge[pil] (k2);
      \node [decision, below of=k1] (a1) {$a_1$}
      edge[pil] (k2)
      edge[pil] (k1);
      \node [decision, below of=k0] (a0) {$a_0$}
      edge[pil] (k2)
      edge[pil] (k1)
      edge[pil] (k0);
      \node [decision, below of=k-1] (a-1) {$a_{-1}$}
      edge[pil] (k2)
      edge[pil] (k1)
      edge[pil] (k0)
      edge[pil] (k-1);
      \node [above left of=a-1,node distance=\diaglength] {\Huge...};
      \node [above right of=a2,node distance=\diaglength] {\Huge...};
    \end{tikzpicture}
    \caption{A causal and stationary channel}
    \label{fig:stat-causal-sp}
  \end{subfigure}
  \begin{subfigure}[t]{0.48\textwidth}
    \begin{tikzpicture}[node distance=\diaglength, auto]
      % Place nodes
      \node [chance] (k2) {$k_2$};
      \node [chance, below left of=k2] (l2) {$\lambda_2$}
      edge[pil] (k2);
      \node [decision, below right of=l2] (a2) {$a_2$}
      edge[pil] (k2);

      \node [chance, above left of=l2] (k1) {$k_1$}
      edge[pil] (l2);
      \node [chance, below left of=k1] (l1) {$\lambda_1$}
      edge[pil] (k1)
      edge[pil] (l2);
      \node [decision, below right of=l1] (a1) {$a_1$}
      edge[pil] (k1)
      edge[pil] (l2);

      \node [chance, above left of=l1] (k0) {$k_0$}
      edge[pil] (l1);
      \node [chance, below left of=k0] (l0) {$\lambda_0$}
      edge[pil] (k0)
      edge[pil] (l1);
      \node [decision, below right of=l0] (a0) {$a_0$}
      edge[pil] (k0)
      edge[pil] (l1);

      \node [chance, above left of=l0] (k-1) {$k_{-1}$}
      edge[pil] (l0);
      \node [decision, below left of=l0] (a-1) {$a_{-1}$}
      edge[pil] (k-1)
      edge[pil] (l0);

      \node [above left of=a-1] {\Huge...};
      \node [above right of=a2] {\Huge...};

    \end{tikzpicture}
    \caption{A hidden Markov model of the above channel}
    \label{fig:hmm} 
  \end{subfigure}

  \caption{Influence diagrams~\cite{HowardInfluenceDiagrams2005} for
    stochastic channels. The boxes represent choices made by the
    experimenter, circles represent random variables, and arrows
    represent a possible causal influence. Note that no arrows point
    backwards in time, and that in the hidden Markov model the state
    $\lambda_t$ mediates all causal influences through time.}
  \label{fig:influence-diagram}
\end{figure}

Definitions~\ref{def:stationary}--\ref{def:hmm} constitute an
equivalent formulation of the ontological models formalism. The
assumptions of this construction can be broken down as follows: (a)
quantum theory is described by a stochastic channel, (b) this channel
is stationary, (c) it is causal, and (d) we assign the system a state
which acts as an HMM of the channel. The authors
of~\cite{Leifertimesymmetricinterpretation2017} identify (c) and (d),
calling them non-retrocausality and $\lambda$-mediation,
respectively. Assumptions (a) and (b) were implicitly present but not
explicitly identified. Since our formulation of the state update rule
follows directly from these assumptions, we see that it is not
`additional' to the ontological models formalism, but a unique
extension to describe a more general empirical scenario.

\section{A brief introduction to the stabilizer subtheory}
\label{sec:wigner-intro}
We focus here on the stabilizer subtheory for $n$ qu$p$its, where $p$
is a prime. For a more detailed exposition, we refer the reader
to~\cite{GrossHudsonTheoremfinitedimensional2006,VeitchResourceTheoryStabilizer2014}.

\paragraph{Mathematical objects} The generalizations of the $X$ and
$Z$ operators to a single qu$p$it are defined by their action on the
standard basis $\{\ket j\}$ for $j=0,1,\ldots,p-1$:
\begin{align}
  X\ket j &= \ket{j+1}\\
  Z\ket j &= \omega^j\ket j\\
  \omega &= e^{2\pi i/p}
\end{align}
All integer arithmetic is done $\bmod p$. Then the full set of
generalized Pauli operators on $n$ qu$p$its is given by
\begin{align}
  T_{(\vb x,\vb z)}
  &=
    \begin{cases}
      \bigotimes_{j=0}^{n-1} X^{x_j}Z^{z_j} & p = 2\\
      \bigotimes_{j=0}^{n-1} \omega^{x_j z_j/2}X^{x_j}Z^{z_j} & p > 2
    \end{cases}\\
  \text{for } \vb x &= (x_0,x_1,\ldots,x_{n-1})\in\Z_p^n \nonumber\\
  \text{and } \vb z &= (z_0,z_1,\ldots,z_{n-1})\in\Z_p^n\nonumber
\end{align}
We also define the symplectic inner product as
\begin{equation}
  \label{eq:26}
  \left[ (\vb x, \vb z), (\vb x',\vb z')\right]
  = \vb z\vdot \vb x' - \vb x\vdot \vb z'.
\end{equation}
Finally, the phase-point operators $A_\lambda$, for
$\lambda=(\vb x,\vb z)\in \Z_p^n\times\Z_p^n$, are a symplectic
Fourier transform of the Pauli operators:
\begin{equation}
  \label{eq:27}
  A_\lambda = \frac{1}{p^n}\sum_{\lambda'\in\Z_p^n\times\Z_p^n}\omega^{[\lambda,\lambda']}T_{\lambda'}
\end{equation}

\paragraph{The stabilizer subtheory} A stabilizer group $S$ is a set
of $p^n$ mutually commuting Pauli operators, which can be specified by
a set of $n$ generators. There is a unique state (up to global phase)
which is an eigenvector of all of these operators with eigenvalue
$+1$; we say that $S$ \emph{stabilizes} this state. For example, for
two qubits, the Bell state
\begin{equation}
  \label{eq:24}
    \ket\Psi = \ket{00}+\ket{11}
\end{equation}
is stabilized by
\begin{align}
  \label{eq:25}
  S_\Psi &= \langle Z_1Z_2,X_1X_2 \rangle\\
         &=\{\Id,Z_1Z_2,X_1X_2,-Y_1Y_2\}.
\end{align}
Here a subscript indicates on which qubit the operator is acting,
e.g. $Z_1=Z\otimes\Id$ describes $Z$ acting on the first qubit. A
stabilizer state, then, is a state which is stabilized by a group of
$p^n$ Pauli operators.
 
Stabilizer measurements are simply measurements of the Pauli
operators. Note that since each Pauli operator has $p$ eigenvalues,
these amount to a measurement of $p$ projectors, each with rank
$p^{n-1}$. Lower-rank projectors can be constructed by performing
commuting Pauli measurements sequentially.

Finally, the transformations of the stabilizer subtheory are called
Clifford transformations. These are the transformations that map the
set of Pauli operators to itself, up to a global phase. In other
words, it is the normalizer of the Pauli group.

The stabilizer subtheory thus consists of preparations corresponding
the set of stabilizer states, measurements of Pauli observables, and
Clifford transformations, along with convex combinations thereof.

\end{document}